\newcommand{\df}{\emph}
\def\set#1{{\{ #1 \}}}
\newcommand{\MyGamma}{\mathrm{\Gamma}}
\newcommand{\MyDelta}{\mathrm{\Delta}}
\newcommand{\MyPi}{\mathrm{\Pi}}
\newcommand{\MySigma}{\mathrm{\Sigma}}
\newcommand{\MyOmega}{\mathrm{\Omega}}
\newcommand{\A}{\mathcal{A}}
\newcommand{\aalph}{\{a\}}
\newcommand{\StackAlph}{\MyGamma}
\newcommand{\confto}{\vdash}
\newcommand{\Terminals}{\MySigma}
\newcommand{\Nonterminals}{\MyDelta}
\newcommand{\Productions}{\pi}
\newcommand{\cclass}[1]{\mathbf{#1}}
\renewcommand{\P}{\cclass{P}}
\renewcommand{\L}{\cclass{L}}
\newcommand{\NP}{\cclass{NP}}
\newcommand{\NL}{\cclass{NL}}
\newcommand{\coNP}{\cclass{coNP}}
\newcommand{\SigmaTwoP}{\cclass{\MySigma_2 P}}
\newcommand{\PiTwoP}{\cclass{\MyPi_2 P}}
\newcommand{\PSPACE}{\cclass{PSPACE}}
\newcommand{\NC}{\cclass{NC}}
\newcommand{\undec}{\text{undecidable}}
\renewcommand{\between}[2]{#1..#2}
\newcommand{\ours}[1]{\text{\hbox to 0pt{${}^{\text{\textit{#1}}}$\hfill}}}
\newcommand{\intto}{\,\text{..}\,}
\newcommand{\questeq}{=^?\!}
\newcommand{\questin}{\in^?\!}
\newcommand{\slpeq}{\equiv}
\newcommand{\cycshift}[2]{{#1} \curvearrowleft {#2}}
\newcommand{\intor}{\cup}
\newcommand{\intdoubled}{\times 2}
\newcommand{\intiter}[1]{{#1} {\times} \mathbb N}
\newcommand{\intexpop}{\{+, \intor, \intdoubled, \intiter{}\}}
\newcommand{\intsem}[1]{S(#1)}
\newcommand{\cproblem}[1]{\textsc{#1}}
\newcommand{\Empty}{\cproblem{UDPDA-Emptiness}}
\newcommand{\Univers}{\cproblem{UDPDA-Universality}}
\newcommand{\Member}{\cproblem{UDPDA-Compressed-Membership}}
\newcommand{\Equiv}{\cproblem{UDPDA-Equivalence}}
\newcommand{\Inclusion}{\cproblem{UDPDA-Inclusion}}
\newcommand{\CompSLP}[1]{\cproblem{SLP-Compo\-nent\-wise-${#1}$}}
\newcommand{\CompSLPbin}{\CompSLP{(0 \le 1)}}
\newcommand{\PartialMatch}{\cproblem{SLP-Partial-Word-Matching}}
\newcommand{\NondetUnivers}{\cproblem{Unary-PDA-Universality}}
\newcommand{\SubsetSum}{\cproblem{Subset-Sum}}
\newcommand{\GenSubsetSum}{\cproblem{Generalized-Subset-Sum}}
\newcommand{\QuerySLP}{\cproblem{SLP-Query}}
\newcommand{\EquivSLP}{\cproblem{SLP-Equivalence}}
\newcommand{\IntExpUnivers}{\cproblem{Integer-$\intexpop$-Expression-Univer\-sality}}
\newcommand{\Bin}{\{0, 1\}}
\newcommand{\str}[1]{\mathrm{eval}(#1)}
\newcommand{\Prog}{\mathcal P}
\newcommand{\Pref}{\mathcal P'}
\newcommand{\Loop}{\mathcal P''}
\newcommand{\Pair}{(\Pref, \Loop)}
\newcommand{\TransPref}{\mathcal T'}
\newcommand{\TransLoop}{\mathcal T''}
\newcommand{\TransPair}{(\TransPref, \TransLoop)}
\newcommand{\prodto}{\to}
\newcommand{\Qint}{Q_{0}}
\newcommand{\Qpop}{Q_{-1}}
\newcommand{\Qpush}{Q_{+1}}
\newcommand{\transint}{\delta_{0}}
\newcommand{\transpop}{\delta_{-1}}
\newcommand{\transpush}{\delta_{+1}}
\newcommand{\ints}[2]{\transint(#1)=#2}
\newcommand{\pops}[3]{\transpop(#1,#3)=#2}
\newcommand{\pushes}[3]{\transpush(#1)=(#2,#3)}
\newcommand{\Transitions}{\delta}
\newcommand{\NonRet}{\mathrm{NonRet}}
\newcommand{\fE}{\mathcal E}
\newcommand{\fH}{\mathcal H}
\newcommand{\fW}{\mathcal W}
\newcommand{\fG}{\mathcal G}
\newcommand{\fEbot}{\fE^\bot}
\newcommand{\Ebot}{E^\bot}
\DeclareMathOperator{\dom}{dom}
\newcommand{\fsym}{f}
\newcommand{\HistAlph}{\{a, \fsym\}}
\newcommand{\conv}{\cdot}
\newcommand{\qmark}{\text{\texttt{?}}}
\newcommand{\eps}{\varepsilon}
\newcommand{\sset}{\subseteq}
\newcommand{\mynote}[1]{}
\spnewtheorem{myclaim}{Claim}{\itshape}{\upshape}
\spnewtheorem*{remark*}{Remark}{\itshape}{\upshape}
\spnewtheorem*{pidea}{Proof idea}{\itshape}{\rmfamily}
\begin{document}

\title{Unary Pushdown Automata\\and Straight-Line Programs}

\author{Dmitry Chistikov \and Rupak Majumdar}
\institute{%
Max Planck Institute for Software Systems (MPI-SWS)\\
Kaiserslautern and Saarbr\"ucken, Germany\\
\email{\{dch,rupak\}@mpi-sws.org}
}

\authorrunning{\quad}
\titlerunning{\quad}

\maketitle

\begin{abstract}
We consider decision problems for deterministic pushdown automata over a unary alphabet
(udpda, for short).
Udpda are a simple computation model that accept exactly the unary regular languages,
but can be exponentially more succinct than finite-state automata.
We complete the complexity landscape for udpda by showing that emptiness (and thus universality)
is $\P$-hard, equivalence and compressed membership problems are $\P$-complete,
and inclusion is $\coNP$-complete.
Our upper bounds are based on a \emph{translation theorem} between udpda
and straight-line programs over the binary alphabet (SLPs).
We show that the characteristic sequence of any udpda can be represented as a 
pair of SLPs---one for the prefix, one for the lasso---that have size
linear in the size of the udpda and can be computed in polynomial time.
Hence, decision problems on udpda are reduced to decision problems on SLPs.
Conversely, any SLP can be converted in logarithmic space into a udpda, and this forms the
basis for our lower bound proofs.
We show $\coNP$-hardness of the ordered matching problem for SLPs, from which we derive $\coNP$-hardness
for inclusion.
In addition, we complete the complexity landscape for unary nondeterministic pushdown automata
by showing that the universality problem is $\PiTwoP$-hard, using a new class of integer expressions.
Our techniques have applications beyond udpda.
We show that our results imply $\PiTwoP$-completeness for a natural fragment of Presburger arithmetic
and $\coNP$ lower bounds for compressed matching problems with one-character wildcards.
\end{abstract}

\section{Introduction}
\label{s:intro}

Any model of computation comes with a set of fundamental decision
questions: emptiness (does a machine accept some input?),
universality (does it accept all inputs?), inclusion (are all inputs accepted by one
machine also accepted by another?),
and
equivalence (do two machines accept exactly the same inputs?).
The theoretical computer science community has
a fairly good understanding of the precise complexity of these problems for most
``classical'' models, such as finite and pushdown automata, with only a few
prominent open questions (e.\,g., the precise complexity of equivalence
for deterministic pushdown automata).

In this paper, we study a simple class of machines: deterministic
pushdown automata working on unary alphabets (unary dpda, or \emph{udpda}
for short).
A classic theorem of Ginsburg and Rice~\cite{ucflreg}
shows that they accept exactly the unary regular languages,
albeit with potentially exponential succinctness
when compared to finite automata.
However, the precise complexity of most basic decision
problems for udpda has remained open.

Our first and main contribution is that we close the complexity picture for these devices.
We show that emptiness is already $\P$-hard for udpda (even when the stack
is bounded by a linear function of the number of states) and thus
$\P$-complete.
By closure under complementation, it follows that universality is $\P$-complete as well.
Our main technical construction shows
equivalence is in $\P$ (and so $\P$-complete).
Somewhat unexpectedly, inclusion is $\coNP$-complete.
In addition, we study the \emph{compressed membership} problem:
given a udpda over the alphabet $\aalph$ and a number $n$ in binary, is $a^n$ in the language?
We show that this problem is $\P$-complete too.

A natural attempt at a decision procedure for equivalence or compressed membership would
go through translations to finite automata (since udpda only accept regular languages,
such a translation is possible).
Unfortunately, these automata can be exponentially larger than the udpda and, as we demonstrate,
such algorithms are not optimal.
Instead, our approach establishes a connection to straight-line programs (\emph{SLPs}) on binary words
(see, e.\,g., Lohrey~\cite{l12}).
An SLP $\Prog$ is a context-free grammar generating a single word, denoted $\str{\Prog}$, over $\Bin$.
Our main construction is a translation theorem: for any udpda, we construct in polynomial
time two SLPs $\Pref$ and $\Loop$ such that the infinite sequence $\str{\Pref} \cdot \str{\Loop}^\omega \in\Bin^\omega$
is the characteristic sequence of the language of the udpda
(for any $i \ge 0$, its $i$th element is $1$ iff $a^i$ is in the language).
With this construction, decision problems on udpda reduce to decision problems on compressed words.
Conversely, we show that from any pair $\Pair$ of SLPs, one can compute, in logarithmic space, a udpda accepting
the language with characteristic sequence $\str{\Pref}\cdot\str{\Loop}^\omega$.
Thus, as regards the computational complexity of decision problems,
lower bounds for udpda may be obtained from lower bounds for SLPs.
Indeed, we show $\coNP$-hardness of inclusion via $\coNP$-hardness of the {\em ordered matching}
problem for compressed words (i.\,e., is $\str{\Prog_1} \leq \str{\Prog_2}$ letter-by-letter, where the alphabet
comes with an ordering $\leq$), a problem of independent interest.

As a second contribution, we complete the complexity picture for unary
\emph{non-deterministic} pushdown automata (unpda, for short).
For unpda, the precise complexity of most decision problems was already known~\cite{huynh}.
The remaining open question was the precise complexity of the
universality problem, and we show that it is $\PiTwoP$-hard
(membership in $\PiTwoP$ was shown earlier by Huynh~\cite{huynh}).
An equivalent question was left open in Kopczy\'nski and To~\cite{kto10} in~2010,
but the question was posed as early as in~1976
by Hunt~III, Rosenkrantz, and Szymanski~\cite[open problem~2]{hrs}, where it was asked whether the
problem was in $\NP$ or $\PSPACE$ or outside both.
Huynh's $\PiTwoP$-completeness result for equivalence~\cite{huynh} showed, in
particular, that universality was in $\PSPACE$, and
our $\PiTwoP$-hardness result reveals that membership in $\NP$ is unlikely under usual
complexity assumptions.
As a corollary, we characterize the complexity
of the $\forall_{\mathrm{bounded}}\,\exists^*$-fragment of Presburger arithmetic,
where the universal quantifier ranges over numbers at most exponential in the size of the formula.

To show $\PiTwoP$-hardness, we show hardness of the universality problem for a class
of integer expressions. Several decision problems of this form, with the set of operations $\{+, \intor\}$,
were studied in the classic paper of Stockmeyer and Meyer~\cite{sm73},
and we show that checking universality of expressions over $\intexpop$
is $\PiTwoP$-complete (the upper bound follows from Huynh~\cite{huynh}).\mynote{SM anywhere else? McK?}

\smallskip\noindent
\textbf{Related work.}
Table~\ref{table} provides the current complexity picture, including
the results in this paper.
Results on general alphabets are mostly classical and included for comparison.
Note that the complexity landscape for udpda differs from those for unpda, dpda, and finite automata.
Upper bounds for emptiness and universality are classical, and
the lower bounds for emptiness are originally by Jones and Laaser~\cite{jl76}
and Goldschlager~\cite{gol81}.
In the nondeterministic unary case,
$\NP$-completeness of compressed membership
is from Huynh~\cite{huynh}, rediscovered later by
Plandowski and Rytter~\cite{prjewels}.
The $\PSPACE$-completeness of the compressed membership problem for
binary pushdown automata (see definition in Section~\ref{s:disc}) is by Lohrey~\cite{l06}.

The main remaining open question is the precise complexity of the equivalence
problem for dpda. It was shown decidable by S\'{e}nizergues~\cite{sen} and primitive
recursive by Stirling~\cite{stir} and Jan\v{c}ar~\cite{jan}, but only $\P$-hardness (from emptiness)
is currently known.
Recently, the equivalence question for dpda when the stack alphabet
is unary was shown to be $\NL$-complete by B\"ohm, G\"oller, and Jan\v{c}ar~\cite{bgj13}.
From this, it is easy to show that emptiness and universality are also $\NL$-complete.
Compressed membership, however, remains $\PSPACE$-complete
(see Caussinus et~al.~\cite{cmtv} and Lohrey~\cite{l11}),
and inclusion is, of course, already undecidable.
When we additionally restrict dpda to both unary input and unary stack alphabet,
all five decision problems are $\L$-complete.\mynote{Not in Appendix.}

We discuss corollaries of our results and other related work in Section~\ref{s:disc}.
While udpda are a simple class of machines,
our proofs show that reasoning about these machines can be
surprisingly subtle.


%

\smallskip
\noindent
\textbf{Acknowledgements.}
We thank Joshua Dunfield for discussions.

\begin{table}[t]
\caption{Complexity of decision problems for pushdown automata.}
\label{table}
\vspace*{-4ex}
\begin{equation*}
\arraycolsep=0.8em
\begin{array}{lcccc}
\hline\\[-2ex]
& \multicolumn{2}{c}{\text{unary}}
& \multicolumn{2}{c}{\text{binary}}
\\
& \text{dpda} & \text{npda}
& \text{dpda} & \text{npda}
\\[0.5ex]
\hline
\rule{0pt}{2.5ex}
\text{Emptiness} & \P\ours{l} & \P & \P & \P \\
\text{Universality} & \P\ours{l} & \PiTwoP\ours{l} & \P & \undec \\
\text{Equivalence} & \P\ours{u,l} & \PiTwoP & \between \P \text{\,pr.rec.} & \undec \\
\text{Inclusion} & \coNP\ours{u,l} & \PiTwoP & \undec & \undec \\
\text{Compressed membership} & \P\ours{u,l} & \NP & \PSPACE & \PSPACE \\
\hline
\end{array}
\end{equation*}
\begin{quote}

Legend:
``dpda'' and ``npda'' stand for deterministic and \emph{possibly}
nondeterministic pushdown automata, respectively;
``unary'' and ``binary'' refer to their input alphabets.
Names of complexity classes stand for completeness
with respect to logarithmic-space reductions;
abbreviation ``pr.rec.'' stands for ``primitive recursive''.
Superscripts \textit{u} and \textit{l} denote new upper and lower bounds
shown in this paper.
\end{quote}
\end{table}

\section{Preliminaries}
\label{s:pre}

\subsubsection*{Pushdown automata.}
%
A \df{unary pushdown automaton} (unpda) over the alphabet $\aalph$ is a finite structure
$\A = (Q, \StackAlph, \bot, q_0, F, \Transitions)$, with $Q$ a set of (control) states,
$\StackAlph$ a stack alphabet, $\bot \in \StackAlph$ a bottom-of-the-stack symbol,
$q_0 \in Q$ an initial state, $F \sset Q$ a set of final states, and
$\Transitions \sset
 (Q \times (\aalph \cup \{\eps\}) \times \StackAlph) \times
 (Q \times \StackAlph^*)$
a set of transitions with the property that,
for every $(q_1, \sigma, \gamma, q_2, s) \in \Transitions$,
either $\gamma \ne \bot$ and $s \in (\StackAlph \setminus \{\bot\})^*$,
or $\gamma = \bot$ and $s \in \{\eps\} \cup (\StackAlph \setminus \{\bot\})^* \bot$.
Here and everywhere below $\eps$ denotes the empty word.

The semantics of unpda is defined in the following standard way.
The set of \df{configurations} of $\A$ is $Q \times (\StackAlph \setminus \{\bot\})^* \bot$.
Suppose $(q_1, s_1)$ and $(q_2, s_2)$ are configurations;
we write $(q_1, s_1) \confto_\sigma \! (q_2, s_2)$
and say that a \df{move} to $(q_2, s_2)$
is available to $\A$ at $(q_1, s_1)$ iff
there exists a transition $(q_1, \sigma, \gamma, q_2, s) \in \Transitions$
such that, if $\gamma \ne \bot$ or $s \ne \eps$, then $s_1 = \gamma s'$
and $s_2 = s s'$ for some $s' \in \StackAlph^*$,
or, if $\gamma = \bot$ and $s = \eps$, then $s_1 = s_2 = \bot$.%
\mynote{Here, for $\gamma = \bot$, there is no difference between internal
    and pop transitions. It only appears in the proof of Theorem~\ref{th:ip},
    when we introduce our normal form in Subsection~\ref{ss:proof:pre}.}
A~unary pushdown automaton is called \df{deterministic}, shortened to \df{udpda},
if at every configuration at most one move is available.


A word $w \in \aalph^*$ is \df{accepted} by $\A$ if there exists a configuration $(q_k, s_k)$
with $q_k \in F$ and a sequence of moves
$(q_i, s_i) \confto_{\sigma_i} \! (q_{i + 1}, s_{i + 1})$, $i = 0, \ldots, k - 1$,
such that $s_0 = \bot$ and $\sigma_0 \ldots \sigma_{k - 1} = w$;
that is, the acceptance is by final state.
The \df{language} of $\A$, denoted $L(\A)$, is the set of all words $w \in \aalph^*$
accepted by $\A$.

We define the \df{size} of a unary pushdown automaton $\A$
as $|Q| \cdot |\StackAlph|$, provided that
for all transitions $(q_1, \sigma, \gamma, q_2, s) \in \Transitions$
the length of the word~$s$ is at most~$2$
(see also~\cite{pudpda}).
While this definition is better suited for deterministic
rather than nondeterministic automata, it already suffices for the purposes
of Section~\ref{s:nondet}, where we handle unpda,
because it is always the case that $|\Transitions| \le 2 \, |Q|^2 \, |\StackAlph|^4$.

\vspace{-3ex}
\subsubsection*{Decision problems.}
%
We consider the following decision problems: emptiness ($L(\A) \questeq \emptyset$),
universality ($L(\A) \questeq \aalph^*$), equivalence ($L(\A_1) \questeq L(\A_2)$),
and inclusion ($L(\A_1) \sset^?\! L(\A_2)$). The \df{compressed membership} problem
for unary pushdown automata is associated with the question $a^n \questin L(\A)$,
with $n$ given in binary as part of the input.
In the following, hardness is with respect to logarithmic-space reductions.
%
Our first result is that emptiness is already $\P$-hard for udpda.

\begin{proposition}
\label{p:al}
\Empty\ and \Univers\ are $\P$-complete.
\end{proposition}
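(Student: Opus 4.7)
The upper bound is standard. Emptiness of any pushdown automaton is decidable in polynomial time via the classical translation to a context-free grammar followed by the cubic-time nonemptiness test. Universality reduces to emptiness: since udpda are deterministic and, by Ginsburg--Rice, recognise regular languages, I plan to complement an input udpda in polynomial time (introduce a sink state for undefined transitions, eliminate infinite $\eps$-loops, and swap the final/non-final state labels) and then invoke the emptiness test on the complement.

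For the lower bound, my plan is a logspace reduction from the Monotone Circuit Value Problem (MCVP), which is $\P$-complete~\cite{gol81}. Given a monotone Boolean circuit $C$ with gates $g_1, \ldots, g_n$ and output $g_n$, I will construct a udpda $\A$ whose language is nonempty iff $C$ evaluates to $1$. For each gate $g_i$ I introduce three control states $e_i, t_i, f_i$, interpreted respectively as ``evaluating $g_i$'', ``$g_i$ returned true'', and ``$g_i$ returned false''; for each internal gate I introduce two stack symbols $\gamma_i^1, \gamma_i^2$ that mark which child is currently being explored. All transitions are $\eps$-moves. A constant gate $g_i = c$ induces $e_i \to t_i$ if $c = 1$ and $e_i \to f_i$ otherwise. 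An $\wedge$-gate $g_i = g_j \wedge g_k$ is encoded as follows: from $e_i$, push $\gamma_i^1$ and move to $e_j$; from $t_j$ with $\gamma_i^1$ on top, pop, push $\gamma_i^2$, and move to $e_k$; from $f_j$ with $\gamma_i^1$ on top, pop and jump to $f_i$; from $t_k$ (resp.\ $f_k$) with $\gamma_i^2$ on top, pop and jump to $t_i$ (resp.\ $f_i$). An $\vee$-gate is symmetric, with the roles of the two verdicts of the first child swapped. The initial configuration is $(e_n, \bot)$, the unique accepting state is $t_n$, and no input-consuming transitions are added, so $L(\A) = \{\eps\}$ when $C$ outputs $1$ and $L(\A) = \emptyset$ otherwise. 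The stack depth never exceeds the depth of $C$, which is at most $n$, so this construction also witnesses hardness under a linearly bounded stack.

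The main obstacle I foresee is preserving \emph{determinism} despite the disjunctive flavour of $\vee$-gates, which I resolve by short-circuit sequential evaluation: the second child is explored only when the first child returns the ``insufficient'' verdict. The stack symbols $\gamma_i^1, \gamma_i^2$ uniquely identify the return context of each pop, so every configuration has at most one applicable move, and the construction is logspace computable since each gate contributes $O(1)$ transitions. For \Univers, I will further modify $\A$ by adding one self-loop $t_n \xrightarrow{a} t_n$; then $L(\A) = \aalph^*$ iff $C$ outputs $1$ and $L(\A) = \emptyset$ otherwise, so the same reduction yields $\P$-hardness of universality, while the matching upper bound comes from complementation together with the emptiness algorithm.
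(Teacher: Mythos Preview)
Your proposal is correct and follows essentially the same strategy as the paper: the upper bounds are identical, and for hardness both arguments build a udpda that performs a deterministic depth-first evaluation of an alternating computation, using the stack to record the return context and short-circuiting to avoid nondeterminism at disjunctive nodes. The only real difference is the source problem---the paper reduces from acceptance of an alternating logspace Turing machine (existential and universal configurations playing the role of your $\vee$- and $\wedge$-gates), while you reduce from MCVP; structurally the two reductions are isomorphic, and your direct self-loop trick for universality is a harmless variant of the paper's complement-in-logspace argument. One minor point: the citation \cite{gol81} in this paper is Goldschlager's note on $\varepsilon$-productions, not his MCVP paper, so you would need a different reference for the $\P$-completeness of MCVP.
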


\begin{proof}
Emptiness is in $\P$ for non-deterministic pushdown automata on any
alphabet, and deterministic automata can be complemented in polynomial
time.
So, we focus on showing $\P$-hardness for emptiness.

We encode the computations of an alternating logspace Turing
machine using an udpda.
We assume without loss of generality that each configuration $c$ of the machine has exactly two
successors, denoted $c_l$ (left successor) and $c_r$ (right
successor), and that each run of the machine terminates.
The udpda encodes a successful run over the computation tree of the TM.
The states of the udpda are configurations of the Turing machine and
an additional \emph{context}, which can be $a$ (``accepting''),
$r$ (``rejecting''), or $x$ (``exploring'').
The stack alphabet consists of pairs $(c,d)$, where $c$ is a
configurations of the machine and the direction
$d\in\set{l,r}$, together with
an additional end-of-stack symbol.
The alphabet has just one symbol $0$.
The initial state is $(c_0, x)$, where $c_0$ is the initial
configuration of the machine,
and the stack has the end-of-stack symbol.

Suppose the current state is $(c,x)$, where $c$ is not an accepting or
rejecting configuration.
The udpda pushes $(c,l)$ on to the stack and updates its state to
$(c_l, x)$, where $c_l$ is the left successor of $c$.
The invariant is that in the exploring phase, the stack maintains the current path in the
computation tree, and if the top of the stack is $(c,l)$ (resp.\ $(c,r)$) then the
current state is the left (resp.\ right) successor of $c$.

Suppose now the current state is $(c,x)$ where $c$ is an accepting
configuration.
The context is set to $a$, and the udpda backtracks up the computation
tree using the stack.
If the top of the stack is the end-of-stack symbol, the machine accepts.
If the top of the stack $(c', d)$ consists of an existential
configuration $c'$, then the new state is $(c',a)$ and recursively the
machine moves up the stack.
If the top of the stack $(c',d)$ consists of a universal configuration
$c'$, and $d=l$, then the new state is $(c'_r, x)$, the right
successor of $c'$, and the top of
stack is replaced by $(c',r)$.
If the top of the stack $(c',d)$ consists of a universal configuration
$c'$, and $d=r$, then the new state is $(c', a)$, the stack is popped,
and the machine continues to move up the stack.

Suppose now the current state is $(c,x)$ where $c$ is a rejecting
configuration.
The context is set to $r$, and the udpda backtracks up the computation
tree using the stack.
If the top of the stack is the end-of-stack symbol, the machine rejects.
If the top of the stack $(c', d)$ consists of an existential
configuration $c'$, and $d=l$, then the new state is $(c'_r,x)$ and
the top of stack is replaced with $(c',r)$.
If the top of the stack $(c', d)$ consists of an existential
configuration $c'$, and $d=r$, then the new state is $(c',r)$, the
stack
is popped, and the machine continues to move up the stack.
The top of stack is replaced with $(c',r)$.
If the top of the stack $(c',d)$ consists of a universal configuration
$c'$, then the new state is $(c',r)$, the stack is popped,
and the machine continues to move up the stack.

It is easy to see that the reduction can be performed in logarithmic
space.
If the TM has an accepting computation tree, the udpda has an
accepting run following the computation tree.
On the other hand, any accepting computation of the udpda is a
depth-first traversal of an accepting computation tree of the TM.

Finally, since udpda can be complemented in logarithmic space, we get
the corresponding results for universality.
This completes the proof.
\qed
\end{proof}

\vspace{-3ex}
\subsubsection*{Straight-line programs.}
%
%
A \df{straight-line program}~\cite{l12}, or an \df{SLP}, over an alphabet $\Terminals$ is a context-free grammar
that generates a single word;
in other words, it is a tuple $\Prog = (S, \Terminals, \Nonterminals, \Productions)$,
where $\Terminals$ and $\Nonterminals$ are disjoint sets of
\df{terminal} and \df{nonterminal} symbols (\df{terminals} and \df{nonterminals}),
$S \in \Nonterminals$ is the \df{axiom},
and the function $\Productions \colon \Nonterminals \to (\Terminals \cup \Nonterminals)^*$
defines a set of \df{productions} written as ``$N \prodto w$'', $w = \Productions(N)$,
and satisfies the property that the relation $\{ (N, D) \mid N \prodto w \text{\ and\ }
D \text{\ occurs in\ } w \}$ is acyclic.
An SLP~$\Prog$ is said to \df{generate} a (unique) word $w \in \Terminals^*$, denoted $\str \Prog$,
which is the result of applying substitutions $\pi$ to $S$.


An SLP is said to be in \df{Chomsky normal form} if for all productions $N \prodto w$
it holds that either $w \in \Terminals$ or $w \in \Nonterminals^{\!2}$.
The \df{size} of an SLP is the number of nonterminals in its Chomsky normal form.%
\mynote{Is this rigorous enough? we do not specify how to get to CNF, that is,
 what is \emph{the} CNF of an SLP.}

%


\section{Indicator pairs and the translation theorem}
\label{s:ip}

We say that a pair of SLPs $\Pair$ over an alphabet $\Terminals$
\df{generates} a sequence $c \in \Terminals^\omega$ if $\str\Pref \cdot (\str\Loop)^\omega = c$.
We call an infinite sequence $c \in \{0, 1\}^\omega$, $c = c_0 c_1 c_2 \ldots$\,,
the \df{characteristic sequence} of a unary language $L \sset \aalph^*$
if, for all $i \ge 0$, it holds that $c_i$ is $1$ if $a^i \in L$ and $0$ otherwise.
One may note that the characteristic sequence is eventually periodic
if and only if $L$ is regular.

\begin{definition}
\label{def:ip}
A pair of straight-line programs $\Pair$ over $\Bin$ is called
an \df{indicator pair} for a unary language $L \sset \aalph^*$ if
it generates the characteristic sequence of $L$.
\end{definition}

\noindent
A unary language can have several different indicator pairs.
Indicator pairs form a descriptional system for unary languages, with
the \df{size} of $\Pair$ defined as the sum of sizes of $\Pref$ and $\Loop$.
The following translation theorem shows that udpda and indicator pairs are
polynomially-equivalent representations for unary regular languages.
%
%
We remark that Theorem~\ref{th:ip} does not give
a normal form for udpda because of the non-uniqueness of indicator pairs.

\begin{theorem}[translation theorem]
\label{th:ip}
For a unary language $L \sset \aalph^*$\textup{:}
\vspace*{-1.5ex}
\begin{enumerate}
\renewcommand{\theenumi}{\textup{\arabic{enumi}}}
\renewcommand{\labelenumi}{\textup{(}\theenumi\textup{)}}
\item\label{th:ip:pdatoip}
      if there exists a udpda $\A$ of size $m$ with $L(\A) = L$, then
      there exists an indicator pair for $L$ of size $O(m)$;
\item\label{th:ip:iptopda}
      if there exists an indicator pair for $L$ of size $m$, then
      there exists a udpda $\A$ of size $O(m)$ with $L(\A) = L$.
\end{enumerate}
\vspace*{-1.5ex}
Both statements are supported by polynomial-time algorithms,
the second of which works in logarithmic space.
\end{theorem}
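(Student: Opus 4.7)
\smallskip\noindent\textbf{Proof proposal.}
My plan is to prove the two directions separately, starting with direction~(\ref{th:ip:iptopda}) as a warm-up. Given SLPs $\Pref$ and $\Loop$, I would build a udpda that, on input $a^n$, looks up the $n$-th symbol of $\str\Pref \cdot \str\Loop^\omega$ by top-down expansion on its stack. I would use four control states---an initial state, an ``advance'' state $q_\mathrm{advance}$, and two bit-holders $q_0, q_1$---and take as stack alphabet the disjoint union of the nonterminals of $\Pref$ and $\Loop$ with $\{\bot\}$. The $\varepsilon$-moves from $q_\mathrm{advance}$ replace a nonterminal $N \to AB$ on top of the stack by $BA$ (so that left-to-right traversal corresponds to stack order), dispatch a unit production $N \to b$ straight into state $q_b$, and replace $\bot$ by $S_\Pref \bot$ on the first visit (from the initial state) and by $S_\Loop \bot$ on every later visit. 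The unique $a$-move from $q_b$ returns to $q_\mathrm{advance}$ without effectively touching the stack; acceptance is declared in $q_1$. Determinism is immediate, $|Q|\cdot|\StackAlph|=O(m)$, and the whole transition table is a local function of the two grammars, giving the claimed logarithmic-space reduction.

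For direction~(\ref{th:ip:pdatoip}) I would take as central object, for each pair $(q,\gamma)$, the word $W(q,\gamma) \in \Bin^*$ produced by the subcomputation that starts from a configuration with state $q$ and top symbol $\gamma$ and runs until that occurrence of $\gamma$ is popped; the $j$-th bit of $W(q,\gamma)$ records whether the udpda is in a final state just after its $j$-th $a$-read in the subcomputation. By determinism $W(q,\gamma)$ depends only on $(q,\gamma)$, and a polynomial-time fixed-point pass over the $|Q|\cdot|\StackAlph|$ pairs decides which ones ``return'', records their exit states, and extracts a decomposition
\[
W(q,\gamma) = B_0 \cdot W(q_1,\gamma_1) \cdot B_1 \cdot W(q_2,\gamma_2) \cdot B_2 \cdots B_k,
\]
where each $B_i$ is a short bit-string read at level $\gamma$ between two consecutive pushes. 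This decomposition compiles directly into productions of an SLP in Chomsky normal form whose nonterminals are indexed by returning pairs, so only $O(m)$ arise.

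To carve off the prefix and the loop I would trace the infinite run on $a^\omega$ and handle two regimes uniformly. Either the stack ultimately stays bounded, in which case the sequence of bottom-level states cycles within $O(|Q|)$ steps and each transition in the cycle is either an internal action or a pushed-and-returning subcomputation expressible as a single $W$-nonterminal; or the stack grows without bound, in which case the ``fresh-high'' sequence $(q^{(i)},\gamma^{(i)})$ of (state, top) pairs at moments of a new maximum height must repeat within $|Q|\cdot|\StackAlph|$ entries, and between two consecutive fresh highs the udpda runs a level-$\gamma^{(i)}$ computation that is again a concatenation of $W$-factors. In either case $\str\Pref$ is the concatenation of the pre-periodic factors and $\str\Loop$ is one period's worth, so both SLPs have size $O(m)$. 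The technical obstacle I expect lies precisely in the uniform polynomial-time identification of the right regime, together with careful handling of $\varepsilon$-cycles and of runs that halt prematurely; but the underlying combinatorial structure---finitely many pairs $(q,\gamma)$ driving all the recursion---keeps the final SLP sizes linear in $m$.
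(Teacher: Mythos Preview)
Your construction for part~(\ref{th:ip:iptopda}) is correct and pleasantly different from the paper's: you keep a constant number of control states and put the grammar into the stack alphabet, whereas the paper does the opposite (states $q_N,\bar q_N$ per nonterminal, with stack symbols recording which child is being explored). Both yield $|Q|\cdot|\StackAlph|=O(m)$; yours is arguably simpler and avoids the fan-out reduction trick the paper needs.

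For part~(\ref{th:ip:pdatoip}), the overall shape is right, but there is a genuine size gap in your SLP. In your decomposition
\[
W(q,\gamma) = B_0 \cdot W(q_1,\gamma_1) \cdot B_1 \cdots B_k,
\]
the number $k$ of level-$\gamma$ horizontal steps before the exit can be $\Theta(|Q|)$. So even though you have $O(m)$ ``top-level'' nonterminals, each production contributes up to $\Theta(|Q|)$ symbols to the Chomsky normal form, and the total SLP size is $\Theta(|Q|^2)$ in the worst case (take $|\StackAlph|=O(1)$, so $m=\Theta(|Q|)$, and a chain of internal states). This is polynomial but not $O(m)$. The paper sidesteps this by observing that the transcript of the return segment depends only on $q$, not on $\gamma$, and then \emph{chaining through horizontal successors}: it writes $E_q \prodto H_q\,E_{\bar q}$, where $\bar q$ is the state after one horizontal step from $q$ and $H_q$ is the transcript of that single step. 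Every production is now constant-size, there are $O(|Q|)$ of them, and the $O(m)$ bound follows. Your ``fresh-high'' treatment of the unbounded-stack regime has the same issue (the segment between two fresh highs is again a length-$\Theta(|Q|)$ concatenation) and the same fix.

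A smaller point: recording ``whether the udpda is in a final state just after its $j$-th $a$-read'' is not quite the characteristic bit. Acceptance allows $\eps$-moves after the last $a$, so $a^j\in L$ iff a final state is visited \emph{anywhere between} the $j$-th and $(j{+}1)$-th reads. The paper handles this by first building a transcript over $\{a,\fsym\}$ (recording both reads and final-state visits) and only then collapsing to $\Bin$; you will need something equivalent.
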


\begin{pidea}
We only discuss part~\ref{th:ip:pdatoip}, which presents the main
technical challenge.
The starting point is the simple observation that a udpda $\A$
has a single infinite computation, provided that the input tape
supplies $\A$ with as many input symbols $a$ as it needs to consume.
Along this computation, \emph{events} of two types are encountered:
$\A$ can consume a symbol from the input and can enter a final state.

The crucial technical task is to construct inductively, using dynamic
programming, straight-line programs that record these events along finite
computational segments. These segments are of two types: first,
between matching push and pop moves (``procedure calls'') and, second,
from some starting point until a move pops the symbol that has been on top
of the stack at that point (``exits from current context''). Loops are
detected, and infinite computations are associated with pairs of SLPs:
in such a pair, one SLP records the initial segment, or prefix of the
computation, and the other SLP records events within the loop.

After constructing these SLPs, it remains to transform the computational
``history'', or \emph{transcript}, associated with the initial
configuration of $\A$ into the characteristic sequence. This
transformation can easily be performed in polynomial time, without
expanding SLPs into the words that they generate. The result is the
indicator pair for $\A$.
\qed
\end{pidea}

\noindent
The full proof of Theorem~\ref{th:ip} is given is Section~\ref{s:proof}.
Note that going from indicator pairs to udpda is useful for obtaining
lower bounds on the computational complexity of decision problems for udpda
(Theorems~\ref{th:member} and~\ref{th:incl}).
For this purpose, it suffices to model just
a single SLP, but taking into account the whole pair is interesting from the
point of view of descriptional complexity (see also Section~\ref{s:disc}).

\section{Decision problems for udpda}
\label{s:app}

\subsection{Compressed membership and equivalence}
\label{ss:meq}


For an SLP $\Prog$, by $|\Prog|$ we denote the length of the word $\str\Prog$, and
by $\Prog[n]$ the $n$th symbol of $\str\Prog$, counting from $0$
(that is, $0 \le n \le |\Prog| - 1$).
We write $\Prog_1 \slpeq \Prog_2$ if and only if $\str{\Prog_1} = \str{\Prog_2}$.

The following \QuerySLP\ problem is known to be $\P$-complete
(see Lifshits and Lohrey~\cite{ll06}): given an SLP $\Prog$ over $\Bin$ and
a number $n$ in binary, decide whether $\Prog[n] = 1$.
The problem \EquivSLP\ is only known to be in~$\P$
(see, e.\,g., Lohrey~\cite{l12}):
given two SLPs $\Prog_1$, $\Prog_2$,
decide whether $\Prog_1 \slpeq \Prog_2$.

\begin{theorem}
\label{th:member}
\Member\ is $\P$-complete.
\end{theorem}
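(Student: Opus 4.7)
The plan is to prove both directions by appealing to the translation theorem (Theorem~\ref{th:ip}) and to the known complexity of \QuerySLP.

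For the $\P$ upper bound, I would proceed as follows. Given a udpda $\A$ together with a number $n$ in binary, invoke part~\ref{th:ip:pdatoip} of Theorem~\ref{th:ip} to build in polynomial time an indicator pair $\Pair$ for $L(\A)$, with size $O(|\A|)$. Next, compute the lengths $N_1 = |\Pref|$ and $N_2 = |\Loop|$ by a straightforward bottom-up evaluation on the Chomsky-normal-form grammars, noting that while these numbers may be exponential in $|\A|$, their binary encodings are of polynomial length, so subsequent arithmetic stays in $\P$. Now $a^n \in L(\A)$ iff the $n$th symbol of $\str{\Pref} \cdot \str{\Loop}^\omega$ equals $1$: if $n < N_1$ this is $\Pref[n]$, and otherwise it is $\Loop[(n - N_1) \bmod N_2]$. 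Either query is an instance of \QuerySLP, which is in $\P$ by Lifshits and Lohrey~\cite{ll06}.

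For the $\P$-hardness, I would reduce from \QuerySLP\ in logarithmic space. Given an SLP $\Prog$ over $\Bin$ and a number $n$ with $0 \le n \le |\Prog| - 1$, form the indicator pair $(\Prog, \Loop_0)$, where $\Loop_0$ is a trivial two-nonterminal SLP generating the single letter $0$. The characteristic sequence associated with this pair is $\str{\Prog} \cdot 0^\omega$, so its $n$th symbol is exactly $\Prog[n]$ whenever $n < |\Prog|$. Applying part~\ref{th:ip:iptopda} of Theorem~\ref{th:ip} yields, in logarithmic space, a udpda $\A$ of size $O(|\Prog|)$ satisfying $a^n \in L(\A)$ iff $\Prog[n] = 1$. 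Outputting $(\A, n)$ finishes the reduction.

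I do not foresee real obstacles here: both halves are essentially bookkeeping on top of Theorem~\ref{th:ip}. The only points requiring a little care are verifying that the intermediate quantities $N_1, N_2, (n - N_1) \bmod N_2$ have polynomial bit length and can be manipulated within $\P$, and confirming that the construction of the trivial lasso SLP $\Loop_0$ and the composition with the logspace construction of part~\ref{th:ip:iptopda} together remain in logarithmic space so that the hardness reduction is valid.
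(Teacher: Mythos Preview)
Your proposal is correct and follows essentially the same approach as the paper: both directions go through Theorem~\ref{th:ip}, with the upper bound reducing to a single \QuerySLP\ call after computing $|\Pref|$ and $|\Loop|$, and the lower bound pairing the given SLP with a fixed lasso and applying part~\ref{th:ip:iptopda} in logspace. Your added remarks about polynomial bit lengths and logspace composition are sound and merely make explicit what the paper leaves implicit.
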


\begin{proof}
The upper bound follows from Theorem~\ref{th:ip}.
Indeed, given a udpda $\A$ and a number $n$,
first construct an indicator pair $\Pair$ for $L(\A)$. Now
compute $|\Pref|$ and $|\Loop|$ and then decide if $n \le |\Pref| - 1$.
If so, the answer is given by $\Pref[n]$, otherwise by $\Loop[r]$,
where $r = (n - |\Pref|) \bmod |\Loop|$ and in both cases $1$ is interpreted
as ``yes'' and $0$ as ``no''.
\par
To prove the lower bound, we reduce from
the \QuerySLP\ problem. Take an instance with an SLP $\Prog$
and a number $n$ in binary.
By transforming the pair $(\Prog, \Prog_0)$, with $\Prog_0$ any fixed SLP over $\Bin$,
into a udpda $\A$ using part~\ref{th:ip:iptopda} of Theorem~\ref{th:ip},
this problem is reduced, in logspace, to whether $a^n \in L(\A)$.
\qed
\end{proof}

\noindent
Recall that emptiness and universality of udpda are $\P$-complete by Proposition~\ref{p:al}.
Our next theorem extends this result to the general equivalence problem for udpda.

\begin{theorem}
\label{th:equiv}
\Equiv\ is $\P$-complete.
\end{theorem}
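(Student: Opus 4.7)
The plan is to establish both bounds via the translation theorem, reducing everything to equivalence of SLPs. For $\P$-hardness, observe that for any udpda $\A$, checking $L(\A) \questeq L(\A_\emptyset)$ against a fixed udpda $\A_\emptyset$ with $L(\A_\emptyset) = \emptyset$ is a logarithmic-space reduction from \Empty\ to \Equiv, and \Empty\ is $\P$-hard by Proposition~\ref{p:al}.

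For the upper bound, given udpda $\A_1$ and $\A_2$, I would first apply Theorem~\ref{th:ip}\,(\ref{th:ip:pdatoip}) to obtain, in polynomial time, indicator pairs $(\Pref_1, \Loop_1)$ and $(\Pref_2, \Loop_2)$ for $L(\A_1)$ and $L(\A_2)$. Then $L(\A_1) = L(\A_2)$ if and only if the characteristic sequences coincide, i.\,e., $\str{\Pref_1} \cdot \str{\Loop_1}^\omega = \str{\Pref_2} \cdot \str{\Loop_2}^\omega$ as infinite words in $\Bin^\omega$. The remaining task is thus to test equality of two eventually periodic $\omega$-words presented compactly by SLP pairs.

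I would carry this out in two steps. First, align the prefix lengths to $\ell = \max(|\Pref_1|, |\Pref_2|)$: for the indicator pair with the shorter prefix, extend $\Pref_i$ by the first $\ell - |\Pref_i|$ symbols of $\str{\Loop_i}^\omega$ and cyclically rotate $\Loop_i$ by the same offset; this leaves the generated $\omega$-sequence unchanged. Both required SLP manipulations — substring extraction and cyclic rotation — admit standard polynomial-time constructions (see, e.\,g.,~\cite{l12}). Second, once both prefixes have length $\ell$, the $\omega$-sequence equality decomposes into two independent SLP checks: $\Pref_1 \slpeq \Pref_2$, which is a direct call to \EquivSLP, and $\str{\Loop_1}^\omega = \str{\Loop_2}^\omega$. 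For the latter, invoke the classical equivalence $u^\omega = v^\omega \iff u \cdot v = v \cdot u$, which reduces to checking $\Loop_1 \cdot \Loop_2 \slpeq \Loop_2 \cdot \Loop_1$ via \EquivSLP\ after concatenating the SLPs.

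The only nontrivial obstacle is the prefix-alignment step, as it requires constructing SLPs for substrings and cyclic shifts of SLP-compressed words without super-polynomial blow-up in size; both operations are well-studied in the SLP literature and available in polynomial time. Once alignment is handled, the remainder is routine bookkeeping on top of Theorem~\ref{th:ip} and the known tractability of \EquivSLP.
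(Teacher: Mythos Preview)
Your proposal is correct and follows the same overall route as the paper: derive $\P$-hardness from emptiness, apply the translation theorem to get indicator pairs, align the prefixes, and reduce the remaining check to \EquivSLP. The only notable difference is in how the periodic parts are compared. The paper computes $t = \gcd(|\Loop_1|, |\Loop_2|)$ and checks that each (suitably rotated) loop is a $t$-th power of the common prefix $\Loop_1[0 \intto t)$, whereas you use the Lyndon--Sch\"utzenberger criterion $u^\omega = v^\omega \iff uv = vu$ to reduce directly to a single \EquivSLP\ call on $\Loop_1 \cdot \Loop_2 \slpeq \Loop_2 \cdot \Loop_1$. Your variant is arguably cleaner (two \EquivSLP\ calls instead of three, no gcd), while the paper's version makes the common primitive period explicit; both are standard and equivalent. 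One small point: your alignment step implicitly also needs SLP \emph{powering} (to build $\str{\Loop_i}^k$ when $\ell - |\Pref_i|$ exceeds $|\Loop_i|$), not just substring extraction and rotation---but this too is a routine repeated-squaring construction, covered by the same SLP toolkit you cite.
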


\begin{proof}
Hardness follows from Proposition~\ref{p:al}. We show how Theorem~\ref{th:ip}
can be used to prove the upper bound: given udpda $\A_1$ and $\A_2$, first construct indicator pairs $(\Pref_1, \Loop_1)$
and $(\Pref_2, \Loop_2)$ for $L(\A_1)$ and $L(\A_2)$, respectively. Now reduce
the problem of whether $L(\A_1) = L(\A_2)$ to \EquivSLP.
The key observation is that an eventually periodic sequence that has periods $|\Loop_1|$ and $|\Loop_2|$
also has period $t = \gcd(|\Loop_1|, |\Loop_2|)$. Therefore, it suffices to check
that, first, the initial segments of the generated sequences match and, second,
that $\Loop_1$ and $\Loop_2$ generate powers of the same word up to a certain circular shift.
\par
In more detail,
let us first introduce some auxiliary operations for SLP.
For SLPs $\Prog_1$ and $\Prog_2$, by $\Prog_1 \cdot \Prog_2$ we denote
an SLP that generates $\str{\Prog_1} \cdot \str{\Prog_2}$, obtained by
``concatenating'' $\Prog_1$ and $\Prog_2$.
Now suppose that $\Prog$ generates $w = w[0] \ldots w[|\Prog|-1]$.
Then the SLP $\Prog[a \intto b)$ generates the word $w[a \intto b) = w[a] \ldots w[b - 1]$,
of length $b - a$ (as in $\Prog[n]$, indexing starts from $0$).
The SLP $\Prog^\alpha$ generates $w^\alpha$, with the meaning clear for $\alpha = 0, 1, 2, \ldots$\,,
also extended to $\alpha \in \mathbb Q$ with $\alpha \cdot |\Prog| \in \mathbb Z_{\ge 0}$
by setting $w^{k + n / |w|} = w^k \cdot w[0 \intto n)$, $n < |w|$.
Finally, $\cycshift{\Prog}{s}$ denotes \df{cyclic shift} and evaluates
to $w[s \intto |w|) \cdot w[0 \intto s)$.
One can easily demonstrate that all these operations can be implemented
in polynomial time.
\par
So, assume that $|\Pref_1| \ge |\Pref_2|$.
First, one needs to check whether $\Pref_1 \slpeq \Pref_2 \cdot (\Loop_2)^\alpha$,
where $\alpha = (|\Pref_1| - |\Pref_2|) / |\Loop_2|$.
Second, note that an eventually periodic sequence that has periods $|\Loop_1|$ and $|\Loop_2|$
also has period $t = \gcd(|\Loop_1|, |\Loop_2|)$. Compute $t$ and an auxiliary
SLP $\Loop = \Loop_1[0 \intto t)$, and then check whether
$\Loop_1 \slpeq (\Loop)^{|\Loop_1| / t}$ and
$\cycshift{\Loop_2}{s} \slpeq (\Loop)^{|\Loop_2| / t}$ with $s = (|\Pref_1| - |\Pref_2|) \bmod |\Loop_2|$.
It is an easy exercise to show that $L(\A_1) = L(\A_2)$ iff all the checks are
successful. This completes the proof.
\qed
\end{proof}

\subsection{Inclusion}
\label{ss:incl}

A natural idea for handling the inclusion problem for udpda would be to extend the result
of Theorem~\ref{th:equiv}, that is, to tackle inclusion similarly to equivalence.
This raises the problem of comparing the words generated by two SLPs
in the componentwise sense with respect to the order $0 \le 1$. To the best of
our knowledge, this problem has not been studied previously, so we deal with it separately.
As it turns out, here one cannot hope for an efficient algorithm unless $\P = \NP$.

Let us define the following family of problems, parameterized by
partial order $R$ on the alphabet of size at least $2$, and denoted $\CompSLP{R}$. The input
is a pair of SLPs $\Prog_1$, $\Prog_2$ over an alphabet partially ordered by $R$, 
generating words of equal length. 
The output is ``yes'' iff for all $i$, $0 \le i < |\Prog_1|$,
the relation $R(\Prog_1[i], \Prog_2[i])$ holds.
By \CompSLPbin\ we mean a special case of this problem where $R$ is the
partial order on $\Bin$ given by $0 \le 0$, $0 \le 1$, $1 \le 1$.

\begin{theorem}
\label{th:comporder}
$\CompSLP{R}$ is $\coNP$-complete if $R$ is not the equality relation (that is, if $R(a, b)$ holds
for some $a \ne b$), and in $\P$\! otherwise.
\end{theorem}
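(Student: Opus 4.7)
The plan is to address the theorem in three pieces: the $\P$ upper bound when $R$ is equality, the $\coNP$ upper bound when $R$ is not equality, and $\coNP$-hardness when $R$ is not equality. If $R$ is the equality relation, then the question reduces immediately to $\str{\Prog_1} = \str{\Prog_2}$, an instance of \EquivSLP, which is in $\P$~\cite{l12}. For the $\coNP$ upper bound when $R$ is not equality, the complement admits a short certificate: an index $i$ with $\neg R(\Prog_1[i], \Prog_2[i])$. Its binary encoding has length polynomial in the input (since $|\Prog_1|$ is at most exponential in the SLP size), and the two characters $\Prog_1[i]$ and $\Prog_2[i]$ can be computed in polynomial time via \QuerySLP~\cite{ll06}.

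For the $\coNP$-hardness, I would proceed as follows. Because $R$ is a partial order other than equality, antisymmetry applied to any witness of non-equality yields letters $c \neq d$ in the alphabet with $R(d,c)$ but $\neg R(c,d)$. The reduction is from the complement of \SubsetSum. Given an instance $(a_1, \ldots, a_n; t)$ with $M = \sum_i a_i$, I would pick $m$ with $2^m \ge 2M+1$ and identify each position of a word of length $2^{n+m}$ with a pair $(\beta, s)$, $\beta \in \Bin^n$ and $s \in \{0, \ldots, 2^m-1\}$. Writing $s_\beta = \sum_{k : \beta_k = 1} a_k$, I set $\Prog_1[(\beta, s)] = c$ iff $s = s_\beta$ (otherwise $d$), and $\Prog_2[(\beta, s)] = d$ iff $s = t$ (otherwise $c$). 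A four-way case analysis then shows $\neg R(\Prog_1[i], \Prog_2[i])$ iff $s = s_\beta = t$, so componentwise $R$ fails somewhere iff the \SubsetSum\ instance has a solution. The SLP for $\Prog_2$ is trivially polynomial: it is a $2^n$-fold repetition of the block $c^t d c^{2^m - 1 - t}$.

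The crux is building $\Prog_1$ as a polynomial-size SLP, which I would do inductively. Let $H_k$ be an SLP for the first $2^{k+m}$ symbols of $\Prog_1$, that is, $2^k$ blocks $d^{s_\beta} c d^{2^m - 1 - s_\beta}$, $\beta \in \Bin^k$. The base case $H_0$ is $c \cdot d^{2^m - 1}$. In the inductive step, the blocks with $\beta_k = 0$ reproduce $H_{k-1}$, while those with $\beta_k = 1$ are obtained by shifting each block's distinguished $c$ by $a_k$ positions to the right. The key observation is that the choice of $m$ forces every block of $H_{k-1}$ to end with at least $a_k$ copies of $d$, so this per-block right shift coincides with the global cyclic right shift of $H_{k-1}$ by $a_k$, which equals $d^{a_k} \cdot H_{k-1}[0 \intto 2^{k-1+m} - a_k)$. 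This yields $H_k = H_{k-1} \cdot d^{a_k} \cdot H_{k-1}[0 \intto 2^{k-1+m} - a_k)$.

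The main obstacle I expect is controlling the size blowup of the iterated SLP prefix extractions. I plan to resolve this by sharing nonterminals aggressively, so that $H_k$ reuses every nonterminal of $H_{k-1}$ and adds only $O(\mathrm{depth}(H_{k-1}) + \log a_k)$ new ones (for the prefix, the power $d^{a_k}$, and the concatenation). Since the depth of $H_k$ grows only linearly with $k$ under this construction, the total size of $H_n$ stays polynomial in $n + \log M$. The reduction is then logspace-computable as required, completing the $\coNP$-hardness proof.
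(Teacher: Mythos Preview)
Your proof is correct and follows essentially the same route as the paper: both reduce from the complement of \SubsetSum\ by constructing, for each subset $\beta$, a block with a single distinguished symbol at position $s_\beta$, and pairing this against a fixed block with its distinguished symbol at position $t$---these are exactly what the paper calls \emph{Lohrey words}, which it imports as a black box from~\cite[Theorem~5.2]{l06} rather than rebuilding as you do. The one point you assert without argument is logspace-computability of the iterated prefix extractions; the paper sidesteps this by citing Lohrey's logspace bound directly, and you would need either to do the same or to argue that your inductive construction (whose step-$k$ productions depend on subtree lengths of $H_{k-1}$) can indeed be emitted in logarithmic space.
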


\begin{proof}
We first prove that \CompSLPbin\ is $\coNP$-hard.
We show a reduction from the complement of \SubsetSum:
suppose we start with an instance of \SubsetSum\ containing a vector of naturals
$w = (w_1, \ldots, w_n)$ and a natural $t$, and the question
is whether there exists a vector $x = (x_1, \ldots, x_n) \in \Bin^n$
such that $x \conv w = t$, where $x \conv w$ is defined as
the inner product $\sum_{i = 1}^{n} x_i w_i$. Let $s = (1, \ldots, 1) \conv w$ be
the sum of all components of $w$.

We use the construction of so-called Lohrey words.
Lohrey shows~\cite[Theorem~5.2]{l06} that given such an instance, it is possible
to construct in logarithmic space two SLPs that generate words
$W_1 = \prod_{x \in \Bin^n} a^{x \conv w} b a^{s - x \conv w}$ and
$W_2 = (a^{t} b a^{s - t})^{2^n}$, where the product in $W_1$
enumerates the $x$es in the lexicographic order. Now $W_1$ and $W_2$ share a symbol $b$
in some position iff the original instance of \SubsetSum\ is a yes-instance.
Substitute $0$ for $a$ and $1$ for $b$ in the first SLP,
and $0$ for $b$ and $1$ for $a$ in the second SLP.
The new SLPs $\Prog_1$ and $\Prog_2$ obtained in this way
form a no-instance of \CompSLPbin\ iff the original instance
of \SubsetSum\ is a yes-instance, because now the ``distinguished'' pair of
symbols consists of a $1$ in $\Prog_1$ and $0$ in $\Prog_2$.
Therefore, \CompSLPbin\ is $\coNP$-hard.

Now observe that, for any $R$, membership of $\CompSLP{R}$
in $\coNP$ is obvious, and the hardness
is by a simple reduction from \CompSLPbin: just substitute $a$ for $0$
and $b$ for $1$ (recall that by the definition of partial order,
$R(b, a)$ would entail $a = b$, which is false).
In the last special case in the statement, $R$ is just the equality relation,
so deciding $\CompSLP{R}$ is the same as deciding \EquivSLP,
which is in~$\P$ (see Section~\ref{s:app}).
This concludes the proof.
\qed
\end{proof}

\noindent
A corollary of Theorem~\ref{th:comporder} on a problem of matching for
compressed partial words is demonstrated in Section~\ref{s:disc}.

\begin{remark*}
An alternative reduction showing hardness of \CompSLPbin, this time from $\cproblem{Circuit-SAT}$,
but also making use of \SubsetSum\ and Lohrey words,
can be derived from Bertoni, Choffrut, and Radicioni~\cite[Lemma~3]{bcr}.
They show that
for any Boolean circuit with NAND-gates there exists a pair of straight-line programs $\Prog_1$,
$\Prog_2$ generating words over $\Bin$ of the same length with the following property:
the function computed by the circuit takes on the value $1$ on at least one
input combination iff both words share a $1$ at some position.
Moreover, these two SLPs can be constructed in polynomial time.
As a result, after flipping all terminal symbols in the second of these SLPs,
the resulting pair is a no-instance of \CompSLPbin\ iff
the original circuit is satisfiable.
\end{remark*}

\begin{theorem}
\label{th:incl}
\Inclusion\ is $\coNP$-complete.
\end{theorem}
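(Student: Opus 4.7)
The plan is to derive both bounds by combining the translation theorem (Theorem~\ref{th:ip}) with the componentwise comparison problem \CompSLPbin, whose complexity is settled by Theorem~\ref{th:comporder}. The key observation is that $L(\A_1) \sset L(\A_2)$ holds iff the characteristic sequence of $L(\A_1)$ is pointwise dominated by that of $L(\A_2)$ under the order $0 \le 1$, so inclusion of unary regular languages is nothing but componentwise SLP comparison lifted to infinite periodic words.

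For the $\coNP$ upper bound, I would first apply part~\ref{th:ip:pdatoip} of Theorem~\ref{th:ip} to obtain indicator pairs $(\Pref_1, \Loop_1)$ and $(\Pref_2, \Loop_2)$ for $L(\A_1)$ and $L(\A_2)$ in polynomial time. Then, reusing the SLP operations (concatenation, fractional powers, cyclic shifts) introduced in the proof of Theorem~\ref{th:equiv}, I would realign the two representations so that both have a common prefix length $M = \max(|\Pref_1|, |\Pref_2|)$ and both loops a common length $\operatorname{lcm}(|\Loop_1|, |\Loop_2|)$, compensating for the phase offset $(|\Pref_1| - |\Pref_2|) \bmod |\Loop_2|$ by a cyclic shift of one of the loops. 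After this alignment, $L(\A_1) \sset L(\A_2)$ becomes equivalent to two instances of \CompSLPbin, one on the aligned prefixes and one on the aligned loops; each is in $\coNP$ by Theorem~\ref{th:comporder}, and their conjunction stays in $\coNP$.

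For the matching lower bound, I would reduce directly from \CompSLPbin, which is $\coNP$-hard by Theorem~\ref{th:comporder}. Given two SLPs $\Prog_1, \Prog_2$ over $\Bin$ of equal length and a fixed SLP $\Loop_0$ generating the single symbol $0$, the pairs $(\Prog_1, \Loop_0)$ and $(\Prog_2, \Loop_0)$ are indicator pairs of unary languages $L_1$ and $L_2$ whose characteristic sequences are $\str{\Prog_1} \cdot 0^\omega$ and $\str{\Prog_2} \cdot 0^\omega$, so $L_1 \sset L_2$ iff $\str{\Prog_1} \le \str{\Prog_2}$ componentwise. Applying part~\ref{th:ip:iptopda} of Theorem~\ref{th:ip} to these pairs then produces, in logarithmic space, udpda $\A_1$ and $\A_2$ realising the reduction.

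The main obstacle is the alignment step of the upper bound: the loop lengths $|\Loop_1|$ and $|\Loop_2|$ can be exponential in the size of the udpda and need not share any obvious structure, so one has to lean carefully on the polynomial-time SLP operations of Theorem~\ref{th:equiv} (especially fractional powers and cyclic shifts) in order to bring the two characteristic sequences into a common periodic form without ever materialising the words they generate. Beyond that, the argument is a routine assembly of Theorems~\ref{th:ip} and~\ref{th:comporder}.
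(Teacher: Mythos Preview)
Your lower-bound argument coincides with the paper's: reduce \CompSLPbin\ to \Inclusion\ via part~\ref{th:ip:iptopda} of Theorem~\ref{th:ip}, padding both SLPs with the same fixed loop.

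Your upper bound is correct but follows a different route from the paper. The paper does \emph{not} reduce inclusion back to \CompSLPbin; instead it argues directly that a witness of non-inclusion, if one exists, has length at most exponential in the size of the two udpda (because each $L(\A_i)$ is accepted by a DFA of size $2^{O(m)}$), and then invokes Theorem~\ref{th:member} to check $a^n \in L(\A_1)$ and $a^n \notin L(\A_2)$ in polynomial time once $n$ is guessed in binary. This is a two-line $\coNP$ argument that sidesteps all the alignment machinery. Your approach, by contrast, stays entirely inside the SLP world and mirrors the structure of the equivalence proof: align prefixes to a common length, raise both loops to the $\operatorname{lcm}$ using repeated squaring, correct the phase with a cyclic shift, and hand the two resulting pairs of SLPs to \CompSLPbin. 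This works because all the SLP operations involved (powers with exponents of polynomial bit-length, substrings, cyclic shifts) are polynomial-time, so the obstacle you flag is real but surmountable exactly as you say. The trade-off is that the paper's proof is shorter and needs only the DFA size bound plus Theorem~\ref{th:member}, whereas yours is more uniform---the same tool, \CompSLPbin, delivers both directions---and makes the connection between inclusion and ordered SLP matching explicit in both halves of the theorem.
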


\begin{proof}
First combine Theorem~\ref{th:comporder} with part~\ref{th:ip:iptopda}
of Theorem~\ref{th:ip} to prove hardness.
Indeed, Theorem~\ref{th:comporder} shows that \CompSLPbin\ is $\coNP$-hard.
Take an instance with two SLPs $\Prog_1$, $\Prog_2$ and transform indicator pairs
$(\Prog_1, \Prog_0)$ and $(\Prog_2, \Prog_0)$, with $\Prog_0$ any fixed SLP over $\Bin$,
into udpda $\A_1$, $\A_2$ with the help of part~\ref{th:ip:iptopda} of Theorem~\ref{th:ip}.
Now the characteristic sequence of $L(\A_i)$, $i = 1, 2$, is equal
to $\str{\Prog_i} \cdot (\str{\Prog_0})^\omega$. As a result, it holds that
$\str{\Prog_1} \le \str{\Prog_2}$ in the componentwise sense
if and only if $L(\A_1) \sset L(\A_2)$.
This concludes the hardness proof.

It remains to show that \Inclusion\ is in $\coNP$.
First note that for any udpda $\A$ there exists a deterministic pushdown automaton (DFA)
that accepts $L(\A)$ and has size at most $2^{O(m)}$, where $m$ is the size of $\A$
(see discussion in Section~\ref{s:disc} or Pighizzini~\cite[Theorem~8]{pudpda}).
Therefore, if $L(\A_1) \not\sset L(\A_2)$, then there exists
a witness $a^n \in L(\A_2) \setminus L(\A_1)$ with $n$ at most exponential
in the size of $\A_1$ and $\A_2$.
By Theorem~\ref{th:member}, compressed membership is in $\P$, so
this completes the proof.
\qed
\end{proof}

\section{Proof of Theorem~\ref{th:ip}}
\label{s:proof}

Let us first recall some standard definitions and fix notation.
In a udpda $\A$, if $(q_1, s_1) \confto_\sigma \! (q_2, s_2)$ for some $\sigma$, we also write
$(q_1, s_1) \confto (q_2, s_2)$.
A \df{computation} of a udpda $\A$ starting at a configuration $(q, s)$
is defined as a (finite or infinite) sequence of configurations $(q_i, s_i)$
with $(q_1, s_1) = (q, s)$ and, for all $i$,
$(q_i, s_i) \confto_{\sigma_i} \! (q_{i + 1}, s_{i + 1})$
for some $\sigma_i$. If the sequence is finite and ends with $(q_k, s_k)$,
we also write $(q_1, s_1) \confto^*_w \! (q_k, s_k)$, where
$w = \sigma_1 \ldots \sigma_{k - 1} \in \aalph^*$.
We can also omit the word $w$ when it is not important
and say that $(q_k, s_k)$ is \df{reachable} from $(q_1, s_1)$;
in other words, the \df{reachability} relation $\confto^*$ is the reflexive
and transitive closure of the move relation $\confto$.

\subsection{From indicator pairs to udpda}
\label{ss:slptopda}

Going from indicator pairs to udpda is the easier direction in Theorem~\ref{th:ip}.
We start with an auxiliary lemma that enables one to model a single SLP with a udpda.
This lemma on its own is already sufficient for lower bounds
of Theorem~\ref{th:member} and Theorem~\ref{th:incl} in Section~\ref{s:app}.

\begin{lemma}
\label{l:slptopda}
There exists an algorithm that works in logarithmic space and transforms
an arbitrary SLP $\Prog$ of size $m$ over $\Bin$
into a udpda $\A$ of size $O(m)$ over $\aalph$ such that
the characteristic sequence of $L(\A)$ is $0 \cdot \str\Prog \cdot 0^\omega$.
In $\A$, it holds that $(q_0, \bot) \confto^*_w \! (\bar q_0, \bot)$
for $w = a^{|\str{\Prog}|}$, $q_0$ the initial state,
and $\bar q_0$ a non-final state without outgoing transitions.
\end{lemma}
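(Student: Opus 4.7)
The plan is to have the udpda $\A$ simulate a leftmost derivation of $\str\Prog$ from the axiom, using the stack to hold the pending nonterminals of the current sentential form and emitting one input symbol $a$ per terminal produced. I assume without loss of generality that $\Prog$ is in Chomsky normal form, which is consistent with the definition of size up to a constant factor. Set $\StackAlph = \Nonterminals \cup \{\bot\}$ and take a constant-size state set $Q = \{q_0, q_{\mathrm{proc}}, q_0^{\mathrm{acc}}, q_1^{\mathrm{acc}}, \bar q_0\}$ with $F = \{q_1^{\mathrm{acc}}\}$. The control flow is the following: from $q_0$ an $\eps$-transition $(q_0, \eps, \bot, q_{\mathrm{proc}}, S\bot)$ pushes the axiom $S$; from $q_{\mathrm{proc}}$ with top $N$ and production $N \prodto AB$, an $\eps$-transition pops $N$ and pushes $AB$ (so $A$ becomes the new top); from $q_{\mathrm{proc}}$ with top $N$ and $N \prodto \sigma \in \Bin$, an $a$-transition pops $N$ and enters $q_\sigma^{\mathrm{acc}}$; from each $q_\sigma^{\mathrm{acc}}$ a pass-through $\eps$-transition (one per stack symbol, inspecting but not modifying the top) returns to $q_{\mathrm{proc}}$; and finally from $q_{\mathrm{proc}}$ with top $\bot$ an $\eps$-transition enters $\bar q_0$, leaving the stack as $\bot$, and $\bar q_0$ has no outgoing transitions.

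This yields $|Q| = O(1)$ and $|\StackAlph| = O(m)$, so the size of $\A$ is $O(m)$, and all transition right-hand sides have length at most $2$, matching the size convention. Determinism is immediate since the current state together with the top stack symbol uniquely selects a transition. For the characteristic sequence, note that $q_0$ is non-final, so position $0$ is $0$. For $1 \le j \le |\str\Prog|$, the $j$-th consumed $a$ corresponds to processing the $j$-th terminal of $\str\Prog$ in leftmost order; right after this consumption $\A$ is in state $q_{\str\Prog[j-1]}^{\mathrm{acc}}$, which is the only final configuration reachable between input consumptions at that moment, so $a^j \in L(\A)$ iff $\str\Prog[j-1] = 1$. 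Once the last terminal is consumed, two $\eps$-moves carry $\A$ through $(q_{\mathrm{proc}}, \bot)$ into $(\bar q_0, \bot)$, establishing the reachability claim $(q_0, \bot) \confto^*_{a^{|\str\Prog|}} \! (\bar q_0, \bot)$. Since $\bar q_0$ has no outgoing transitions, no further $a$'s can be consumed, and every position of the characteristic sequence beyond $|\str\Prog|$ is $0$.

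For the logspace bound, the transitions are produced by a straightforward enumeration: for each nonterminal $N$, look up its right-hand side in $\Prog$ and emit the one corresponding transition; for each stack symbol $\gamma$ and each $\sigma \in \Bin$, emit the pass-through $\eps$-transition from $q_\sigma^{\mathrm{acc}}$; and emit the constantly many ``skeleton'' transitions involving $q_0$, $\bot$, and $\bar q_0$. Only a constant number of logarithmic-size indices into $\Prog$ are needed. If the input SLP is not already in CNF, a standard preliminary conversion runs in logspace. I do not foresee any deep obstacle; the items to watch are aligning the indexing of the characteristic sequence (the offset of one caused by the leading $0$), ensuring the $\eps$-transitions fit into the deterministic framework (which is why the return from $q_\sigma^{\mathrm{acc}}$ must branch on the top stack symbol), and checking that the transitions involving $\gamma = \bot$ respect the syntactic restrictions laid out in Section~\ref{s:pre}.
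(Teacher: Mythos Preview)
Your construction is correct, and in fact it is cleaner than the one in the paper. The paper builds a gadget per nonterminal: each $N$ gets entry and exit states $q_N,\bar q_N$, and a production $N\prodto AB$ is simulated by pushing dedicated return symbols $\gamma_N^1,\gamma_N^2$ before ``calling'' $A$ and $B$. This produces $O(m)$ states but also up to $2$ stack symbols per nonterminal, so the raw size is $\MyOmega(m^2)$; the paper then invokes a fan-out reduction trick from circuit complexity to bring the stack alphabet down and recover the $O(m)$ bound. Your approach instead keeps the nonterminals themselves on the stack and runs a direct leftmost-derivation simulation with a constant-size control; because $|Q|=O(1)$ and $|\StackAlph|=|\Nonterminals|+1$, the product $|Q|\cdot|\StackAlph|$ is $O(m)$ immediately, with no need for the fan-out argument. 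The paper's gadget style is more modular (each nonterminal is a self-contained subroutine with its own interface), which can be convenient for other constructions, but for this lemma your route is shorter and avoids an extra step. The only points worth tightening in a final write-up are exactly the ones you flagged: make explicit the pass-through transitions $(q_\sigma^{\mathrm{acc}},\eps,\gamma,q_{\mathrm{proc}},\gamma)$ including the case $\gamma=\bot$, and note that the input SLP may be assumed in CNF since the paper defines size that way.
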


\begin{proof}
The main part of the algorithm works as follows\mynote{easier with equations?}.
Assume that $\Prog$
is given in Chomsky normal form. With each nonterminal $N$ we associate
a gadget in the udpda $\A$, whose \df{interface} is by definition
the entry state $q_N$ and the exit state $\bar q_N$, which will only have outgoing pop transitions.
With a production of the form $N \prodto \sigma$, $\sigma \in \Bin$,
we associate a single internal transition from $q_N$ to $\bar q_N$ reading
an $a$ from the input tape. The state $q_N$ is always non-final, and
the state $\bar q_N$ is final if and only if $\sigma = 1$. With a production
of the form $N \prodto A B$ we associate two stack symbols $\gamma_N^1$, $\gamma_N^2$
and the following gadget.
At a state $q_N$, the udpda pushes a symbol $\gamma_N^1$ onto the stack
and goes to the state $q_A$. We add
a pop transition from $\bar q_A$ that reads $\gamma_N^1$ from the stack and leads
to an auxiliary state $q'_N$. The only transition from this state pushes
$\gamma_N^2$ and leads to $q_B$, and another transition from $\bar q_B$
pops $\gamma_N^2$ and goes to $\bar q_N$. Here all three states $q_N$, $q'_N$,
and $\bar q_N$ are non-final, and the four introduced incident transitions do not read
from the input. Finally, if a nonterminal $N$ is the axiom of $\Prog$,
make the state $q_N$ initial and non-final and make $\bar q_N$ a non-accepting sink
that reads~$a$ from the input and pops $\bot$.
The reader can easily check that the
characteristic sequence of the udpda $\A$ constructed in this way is
indeed $0 \cdot \str\Prog \cdot 0^\omega$,
and the construction can be performed in logarithmic space.
\par
Now note that while the udpda $\A$ satisfies $|Q| = O(m)$, we may have also introduced
up to $2$ stack symbols per nonterminal. Therefore, the size of $\A$ can be as large
as $\MyOmega(m^2)$. However, we can use a standard trick from circuit complexity
to avoid this blowup and make this size linear in $m$. Indeed, first observe
that the number of stack symbols, not counting $\bot$, in the construction
above can be reduced to~$k$, the maximum, over all nonterminals $N$, of the
number of occurrences of $N$ in the right-hand sides of productions of $\Prog$.
Second, recall that a straight-line program naturally defines a circuit where
productions of the form $N \prodto A B$ correspond to gates performing
concatenation. The value of $k$ is the maximum fan-out of gates in this
circuit, and it is well-known how to reduce it to $O(1)$ with just
a constant-factor increase in the number of gates (see, e.\,g.,
Savage~\cite[Theorem~9.2.1]{savage}).
The construction can be easily
performed in logarithmic space, and the only building block is the identity
gate, which in our case translates to a production of the form $N \prodto A$.
Although such productions are not allowed in Chomsky normal form, the
construction above can be adjusted accordingly, in a straightforward fashion.
This completes the proof.
\qed
\end{proof}

\noindent
Now, to model an entire indicator pair, we apply Lemma~\ref{l:slptopda} twice
and combine the results.

\begin{lemma}
\label{l:iptopda}
There exists an algorithm that works in logarithmic space and,
given an indicator pair $\Pair$ of size $m$ for some unary language $L \sset \aalph^*$,
outputs a udpda $\A$ of size $O(m)$ such that $L(\A) = L$.
\end{lemma}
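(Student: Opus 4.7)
The plan is to invoke Lemma~\ref{l:slptopda} twice---once on $\Pref$ and once on $\Loop$---and splice the two resulting udpda into one. Applying the lemma would yield $\A'$ and $\A''$ of sizes $O(|\Pref|)$ and $O(|\Loop|)$, with initial states $q'_0, q''_0$ and exit states $\bar q'_0, \bar q''_0$ respectively. By the reachability guarantee of Lemma~\ref{l:slptopda}, each of $\A'$ and $\A''$ returns its stack to $\bot$ upon reaching its exit state, which has no outgoing transitions; this stack-reset property is what makes the gluing possible without having to synchronise stack contents.

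The combined udpda $\A$ would then be the disjoint union of $\A'$ and $\A''$ (sharing the bottom-of-stack symbol), augmented with two $\eps$-transitions: one from $\bar q'_0$ to $q''_0$, to hand control to the loop once the prefix has been consumed; and one from $\bar q''_0$ to $q''_0$, to make the loop restart indefinitely. The initial state of $\A$ would be $q'_0$, or $q''_0$ in the degenerate case $\str\Pref = \eps$. The resulting udpda has size $O(|\Pref| + |\Loop|) = O(m)$, and since Lemma~\ref{l:slptopda} and the additional operations (disjoint union and two extra transitions) are all logspace, so is the whole construction.

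The main obstacle is aligning $\A$'s characteristic sequence with the target $\str\Pref \cdot \str\Loop^\omega$: Lemma~\ref{l:slptopda} produces sequences of the form $0 \cdot \str\Prog \cdot 0^\omega$, so a naive splicing would introduce a spurious $0$ at position $0$ and at every loop boundary, and would in fact shift the whole characteristic sequence by one position. I would address this by using a light variant of Lemma~\ref{l:slptopda} in which the finality of each leaf gadget $N \prodto \sigma$ is carried by the \emph{entry} state of its $a$-transition rather than by the exit; propagating this change through the non-leaf gadgets is transparent, since non-leaf entry and exit states remain non-final, and it yields the cleaner characteristic sequence $\str\Prog \cdot 0^\omega$, eliminating the leading zero. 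With this variant in hand, the $\eps$-transitions linking $\bar q'_0$ and $\bar q''_0$ back to $q''_0$ ensure that the $\eps$-closure at every boundary position reaches a leaf entry state carrying the correct first bit of $\str\Loop$, so that every position's finality matches the target. A short induction on the position of the characteristic sequence then confirms $L(\A) = L$.\qed
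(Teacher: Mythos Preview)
Your proposal is correct and follows essentially the same plan as the paper: invoke Lemma~\ref{l:slptopda} on each of the two SLPs and glue the resulting udpda with $\eps$-transitions from the exit states into the initial state of the loop part. The one point of divergence is how the leading-zero shift of Lemma~\ref{l:slptopda} is absorbed. You re-open the lemma and move finality from the exit to the entry state of each leaf gadget, so that the sub-automaton realises $\str\Prog \cdot 0^\omega$ directly; the paper instead keeps the lemma as a black box, strips off the first bit $b = \Pref[0]$ in logspace, applies the lemma to the remainder $\Pref_1$ (with $\str\Pref = b \cdot \str{\Pref_1}$), and recovers $b$ by setting the finality of the combined automaton's initial state accordingly. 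Both patches are one-line and yield the same bounds; yours makes the loop junction slightly cleaner, while the paper's avoids touching the internals of Lemma~\ref{l:slptopda}.
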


\begin{proof}
We shall use the same notation as in Subsection~\ref{ss:meq} of Section~\ref{s:app}.
First compute the bit $b = \Pref[0]$ and construct an SLP $\Pref_1$
of size $O(m)$ such that $\str\Pref = b \cdot \str{\Pref_1}$.
Note that this can be done in logarithmic space, even though
the general \QuerySLP\ problem is $\P$-complete.
Now construct, according to Lemma~\ref{l:slptopda},
two udpda $\A'$ and $\A''$ for $\Pref_1$ and $\Loop$, respectively.
Assume that their sets of control states are disjoint and connect them
in the following way. Add internal $\eps$-transitions from the ``last''
states of both to the initial state of $\A''$. Now make the initial state
of $\A'$ the initial state of $\A$; make it also a final state if $b = 1$.
It is easily checked that the language of the udpda $\A$ constructed in this way
has characteristic sequence $\str\Pref \cdot (\str\Loop)^\omega$
and, hence, is equal to $L$.
\qed
\end{proof}

\noindent
Lemma~\ref{l:iptopda} proves part~\ref{th:ip:iptopda} in Theorem~\ref{th:ip}.

\subsection{From udpda to indicator pairs}
\label{ss:proof:pre}

Going from udpda to indicator pairs is the main part of Theorem~\ref{th:ip},
and in this subsection we describe our construction in detail.
The proof of the key technical lemma is deferred until the following
Subsection~\ref{ss:proof:translemma}.

\vspace{-3ex}
\subsubsection*{Assumptions and notation.}
We assume without loss of generality that the given udpda $\A$ satisfies the following
conditions. First, its set of control states, $Q$, is partitioned into three
subsets according to the type of available moves. More precisely, we assume%
\footnote{Here and further in the text
we use the symbol $\sqcup$ to denote the union of disjoint sets.}
$Q = \Qint \sqcup \Qpush \sqcup \Qpop$ with the property that all transitions
$(q, \sigma, \gamma, q', s)$
with states $q$ from $Q_d$, $d \in \{0, -1, +1\}$, have $|s| = 1 + d$;
moreover, we assume that $s = \gamma$ whenever $d = 0$,
and $s = \gamma' \gamma$ for some $\gamma' \in \StackAlph$ whenever $d = 1$.

Second, for convenience of notation we assume
that there exists a subset $R \sset Q$ such that all transitions departing from
states from $R$ read a symbol from the input tape, and transitions departing
from $Q \setminus R$ do not.

Third, we assume that $\Transitions$ is specified by means of total functions
$\transint \colon \Qint \to Q$,
$\transpush \colon \Qpush \to Q \times \StackAlph$, and
$\transpop \colon \Qpop \times \StackAlph \to Q$.
We write
$\ints{q}{q'}$, $\pushes{q}{q'}{\gamma}$, and $\pops{q}{q'}{\gamma}$ accordingly;
associated transitions and states are called \df{internal}, \df{push}, and \df{pop} transitions and states,
respectively.
Note that this assumption implies that only pop transitions can ``look'' at the top
of the stack.

\begin{myclaim}
\label{c:udpdanf}
An arbitrary udpda $\A' = (Q', \StackAlph, \bot, q'_0, F', \Transitions')$ of size $m$
can be transformed into a udpda $\A = (Q, \StackAlph, \bot, q_0, F, \Transitions)$ that
accepts $L(\A')$,
satisfies the assumptions of this subsubsection\mynote{FIXME}, and
has $|Q| = O(m)$ control states.
\end{myclaim}

\noindent
The proof is easy and left to the reader.

Note that since $\A$ is deterministic, it holds that for any configuration $(q, s)$ of $\A$
there exists a unique infinite computation $(q_i, s_i)_{i = 0}^{\infty}$ starting at $(q, s)$,
referred to as \df{the} computation in the text below.
This computation can be thought of as a run of $\A$ on an input tape with an
infinite sequence $a^\omega$.
\df{The} computation of $\A$ is,
naturally, the computation starting from $(q_0, \bot)$.
Note that it is due to the fact that $\A$
is unary that we are able to feed it a single infinite word instead of
countably many finite words.

In the text below we shall use the following notation and conventions.
To refer to an SLP $(S, \Terminals, \Nonterminals, \Productions)$, we sometimes
just use its axiom, $S$. The generated word, $w$, is denoted by $\str S$ as usual.
Note that the set of terminals is often understood from the context
and the set of nonterminals is always the set of left-hand sides of productions.
This enables us to use the notation $\str S$ to refer to the word generated
by the implicitly defined SLP, whenever the set of productions is clear from
the context.

\vspace{-3ex}
\subsubsection*{Transcripts of computations and overview of the algorithm.}
Recall that our goal is to describe an algorithm that, given a udpda $\A$,
produces an indicator pair for $L(\A)$.
We first assemble some tools that will allow us to handle
the computation of~$\A$ per~se. To this end, we introduce transcripts of
computations, which record ``events'' that determine whether certain input
words are accepted or rejected.

Consider a (finite or infinite) computation that consists of moves
$(q_i, s_i) \confto_{\sigma_i} \! (q_{i + 1}, s_{i + 1})$,
for $1 \le i \le k$ or for $i \ge 1$, respectively.
We define the \df{transcript} of such a computation as a (finite or infinite) sequence
\begin{equation*}
\mu(q_1) \, \sigma_1 \, \mu(q_2) \, \sigma_2 \, \ldots \, \mu(q_k) \, \sigma_k
\quad\text{or}\quad
\mu(q_1) \, \sigma_1 \, \mu(q_2) \, \sigma_2 \, \ldots,
\quad\text{respectively,}
\end{equation*}
where, for any $q_i$, $\mu(q_i) = \fsym$ if $q_i \in F$
and $\mu(q_i) = \eps$ if $q_i \in Q \setminus F$. Note that in the finite case
the transcript does \emph{not} include $\mu(q_{k + 1})$ where $q_{k + 1}$ is
the control state in the last configuration.
In particular, if a computation consists of a single configuration, then
its transcript is $\eps$.
In general, transcripts are finite words and infinite sequences over the auxiliary alphabet $\HistAlph$.

The reader may notice that our definition for the finite case
basically treats finite computations as left-closed, right-open intervals
and lets us perform their concatenation in a natural way. We note, however,
that from a technical point of view, a definition treating them
as closed intervals would actually do just as well.

Note that any sequence $s \in \HistAlph^\omega$ containing infinitely
many occurrences of $a$ naturally \df{defines} a unique characteristic sequence $c \in \Bin^\omega$
such that if $s$ is the transcript of a udpda computation,
then $c$ is the characteristic sequence of this udpda's language.
The following lemma shows that this correspondence is efficient
if the sequences are represented by pairs of SLPs.

\begin{lemma}
\label{l:transtochar}
There exists a polynomial-time algorithm that, given
a pair of straight-line programs $\TransPair$ of size $m$
that generates a sequence $s \in \HistAlph^\omega$ and
such that the symbol $a$ occurs in $\str \TransLoop$,
produces a pair of straight-line programs $\Pair$ of size $O(m)$
that generates the characteristic sequence defined by $s$.
\end{lemma}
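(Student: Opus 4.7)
The plan is to compute, per nonterminal of the two input SLPs, a compact representation of the auxiliary map $w \mapsto \psi(w)$, where for a finite word $w \in \HistAlph^*$ we let $\psi(w) \in \Bin^*$ be the bit string of length equal to the number of $a$'s in $w$ whose $i$th bit is $1$ iff the $i$th $a$ in $w$ is preceded (in $w$) by at least one $f$ since the previous $a$, or since the start of $w$ for $i = 1$. Writing $u = \str\TransPref$, $v = \str\TransLoop$, and $s = u v^\omega$, I claim the required characteristic sequence is $\psi(u) \cdot \hat v_1 \cdot \hat v^\omega$, where $\hat v_1$ is $\psi(v)$ with its first bit replaced by $\alpha(v) \vee \beta(u)$ and $\hat v$ is $\psi(v)$ with its first bit replaced by $\alpha(v) \vee \beta(v)$; here $\alpha(w)$ and $\beta(w)$ stand for ``first symbol of $w$ is $f$'' and ``last symbol of $w$ is $f$''. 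These formulas come from inspecting the run of $f$'s between two successive $a$'s straddling a copy boundary of $v$: it consists of the trailing $f$-chunk of one block and the leading $f$-chunk of the next.

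For each nonterminal $N$ (generating $w_N$) I would store the bits $\alpha(N), \beta(N)$, the flag $h(N)$ telling whether $w_N$ contains an $a$, and, when $h(N) = 1$, an SLP $\mathrm{rest}(N)$ for $\psi(w_N)[1 \intto |\psi(w_N)|)$; since the first bit of $\psi(w_N)$, when defined, always equals $\alpha(N)$, this split loses no information. The inductive step for a production $N \to AB$ is easy: if $h(B) = 0$ set $\mathrm{rest}(N) := \mathrm{rest}(A)$; if $h(A) = 0$ set $\mathrm{rest}(N) := \mathrm{rest}(B)$; and in the mixed case the first bit of $\psi(w_B)$, appearing inside the concatenation, has to be corrected to $b = \alpha(B) \vee \beta(A)$, giving $\mathrm{rest}(N) := \mathrm{rest}(A) \cdot \langle b \rangle \cdot \mathrm{rest}(B)$. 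Each step adds only a constant number of nonterminals, so after a bottom-up pass I obtain $\psi(u)$ and $\psi(v)$ in decomposed form of total size $O(m)$ in polynomial time. Finally I assemble $\Loop := \langle \alpha(v) \vee \beta(v) \rangle \cdot \mathrm{rest}(v)$ and $\Pref := \psi(u) \cdot \langle \alpha(v) \vee \beta(u) \rangle \cdot \mathrm{rest}(v)$, each adding $O(1)$ nonterminals on top of the shared $\mathrm{rest}(\cdot)$ sub-SLPs.

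The main obstacle will be preserving the linear size bound $O(m)$. A naive implementation of ``modify the first bit of $\psi(v)$'' via generic SLP subword extraction would blow up the size polynomially per application, killing the linear guarantee after even one pass over the input. The decomposed representation $\alpha(N) \cdot \mathrm{rest}(N)$ is exactly what sidesteps this: overriding the first bit becomes swapping a single-character leaf, a constant-cost operation. Correctness of the case split for $N \to AB$, of the boundary formulas for $\hat v_1$ and $\hat v$, and of a handful of degenerate cases (for instance $u$ without any $a$, in which $\beta(u)$ still correctly captures whether an $f$ sits between the beginning of $s$ and its first $a$; or nonterminals whose word consists only of $f$'s, for which $\alpha = \beta = 1$ and $\mathrm{rest}$ is undefined) is then a routine verification.
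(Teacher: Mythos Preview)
Your proposal is correct and lands close to the paper's own argument, though with a different internal decomposition. The paper phrases the transformation as a composition of three substitutions, $h_1 \colon a\fsym \mapsto 1$, then the trivial $h_2 \colon a \mapsto 0$ and $h_3 \colon \fsym \mapsto \eps$; the only nontrivial step is $h_1$, which it realises by a bottom-up pass that, at each production $N \to AB$, detects a two-letter pattern straddling the boundary and rewrites using auxiliary ``strip one end character'' nonterminals $A a^{-1}$ and $\fsym^{-1} B$. You instead compute $\psi$ directly in a single pass, carrying per nonterminal the pair $(\text{first bit of } \psi,\ \text{SLP for the remaining bits})$; overriding the first bit then costs $O(1)$, which is precisely the role played by the paper's $a^{-1}$ and $\fsym^{-1}$ nonterminals. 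Both approaches need the same boundary fix-up between $\TransPref$ and the first copy of $\TransLoop$ and between consecutive copies of $\TransLoop$, and both deliver size $O(m)$ in polynomial time. Your decomposition is a little more economical (no stripped-character nonterminals, no trailing substitutions) and makes the linear bound immediate; the paper's phrasing as a pipeline of simple rewritings is perhaps easier to state but requires a separate inductive definition of the $Na^{-1}$ and $\fsym^{-1}N$ families.
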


\begin{proof}
Observe that it suffices to apply to the sequence generated by $\TransPair$
the composition of the following substitutions:
$h_1 \colon a \fsym \mapsto 1$,
$h_2 \colon a \mapsto 0$, and
$h_3 \colon \fsym \mapsto \eps$.
One can easily see that applying $h_2$ and $h_3$ reduces to applying
them to terminal symbols in SLPs, so it suffices to show that the application
of $h_1$ can also be done in polynomial time and increases the number of
productions in Chomsky normal form by at most a constant factor.

We first show how to apply $h_1$ to a single SLP.
Assume Chomsky normal form and process the
productions of the SLP inductively in the bottom-up direction.
Productions with terminal symbols remain unchanged, and
productions of the form $N \prodto A B$ are handled as follows:
if $\str A$ ends with an $a$ and $\str B$ begins with an $\fsym$, then
replace the production with $N \prodto (A a^{-1}) \cdot 1 \cdot (\fsym^{-1} B)$,
otherwise leave it unchanged as well. Here we use auxiliary nonterminals
of the form $N a^{-1}$ and $\fsym^{-1} N$ with the property that
$\str{N a^{-1}} \cdot a = \str N$ and $\fsym \cdot \str{\fsym^{-1} N} = \str N$.
These nonterminals are easily defined inductively in a straightforward manner,
just after processing $N$. At the end of this process one obtains an SLP
that generates the result of applying $h_1$ to the word generated by the original SLP.

We now show how to handle the fact that we need to apply $h_1$ to the entire
sequence $\str \TransPref \cdot (\str \TransLoop)^\omega$.
Process the SLPs $\TransPref$ and $\TransLoop$ as described above;
for convenience, we shall use the same two names for the obtained programs.
Then deal with the junction points in the
sequence $\str\TransPref \cdot (\str\TransLoop)^\omega$ as follows.
If $\str\TransLoop$ does not start with an $\fsym$, then there is nothing to
do. Now suppose it does; then there are two options. The first option
is that $\str\TransLoop$ ends with an $a$. In this case replace $\TransLoop$
with $(\fsym^{-1} \TransLoop) \cdot 1$ and $\TransPref$ with $(\TransPref a^{-1}) \cdot 1$
or with $(\TransPref \fsym)$ according to whether it ends with an $a$ or not.
The second option is that $\TransLoop$ does not end with an $a$. In this
case, if $\TransPref$ ends with an $a$, replace it with
$(\TransPref a^{-1}) \cdot 1 \cdot (\fsym^{-1} \TransLoop)$, otherwise
do nothing. One can easily see that the pair of SLPs obtained on this step will
generate the image of the original sequence $\str \TransPref \cdot (\str \TransLoop)^\omega$
under $h_1$. This completes the proof.
\qed
\end{proof}

\noindent
Note that
we could use a result by Bertoni, Choffrut, and Radicioni~\cite{bcr}
and apply a four-state transducer (however, the underlying automaton needs to
be $\eps$-free, which would make us figure out the last position ``manually'').

Now it remains to show how to efficiently produce, given a udpda $\A$,
a pair of SLPs $\TransPair$ generating the transcript of the computation
of $\A$. This is the key part of the entire algorithm, captured
by the following lemma.

\begin{lemma}
\label{l:trans}
There exists a polynomial-time algorithm that, given a udpda $\A$ of size $m$, produces a pair
of straight-line programs $\TransPair$ of size $O(m)$ that generates the transcript
of the computation of $\A$.
\end{lemma}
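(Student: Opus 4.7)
My plan is to build $\TransPair$ in two phases. First, I will compute, by dynamic programming over the finite set $Q \times \StackAlph$, a library of ``return summaries'': for each pair $(q, \gamma)$, a partial function $\rho(q, \gamma) \in Q$ together with an SLP nonterminal $N_{q, \gamma}$. The intended meaning is that whenever $\A$, started at any configuration $(q, \gamma s')$, eventually executes a move that pops this distinguished copy of $\gamma$, then $\rho(q, \gamma)$ is the state of $\A$ immediately after that pop, and $N_{q, \gamma}$ generates the transcript of the segment from $(q, \gamma s')$ up to and including that pop; otherwise $\rho(q, \gamma)$ is left undefined. The defining equations follow the normal form of Claim~\ref{c:udpdanf}: a pop $\pops{q}{q'}{\gamma}$ reading $\sigma$ gives $\rho(q, \gamma) = q'$ and $N_{q, \gamma} \prodto \mu(q)\,\sigma$; an internal $\ints{q}{q'}$ reading $\sigma$ gives $\rho(q, \gamma) = \rho(q', \gamma)$ and $N_{q, \gamma} \prodto \mu(q)\,\sigma\,N_{q', \gamma}$; and a push $\pushes{q}{q'}{\gamma'}$ reading $\sigma$ with $\rho(q', \gamma') = q''$ defined gives $\rho(q, \gamma) = \rho(q'', \gamma)$ and $N_{q, \gamma} \prodto \mu(q)\,\sigma\,N_{q', \gamma'}\,N_{q'', \gamma}$. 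A least-fixpoint iteration populates $\rho$ and the $N_{q, \gamma}$ in polynomial time, producing $O(m)$ productions of right-hand side length at most~$3$.

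In the second phase, I will use this library to trace the infinite computation of $\A$ from $(q_0, \bot)$. Since $\bot$ cannot be popped, $\rho(q_0, \bot)$ is automatically undefined; more generally, a pair $(q, \gamma)$ with $\rho$ undefined marks a moment at which the current top symbol $\gamma$ will remain on the stack forever. On such pairs I define a ``same-level successor'' $\mathrm{succ}$: for internal $q$ with $\ints{q}{q'}$, set $\mathrm{succ}(q, \gamma) = (q', \gamma)$; for a push $\pushes{q}{q'}{\gamma'}$ with $\rho(q', \gamma') = q''$ defined, skip over the finite sub-segment and set $\mathrm{succ}(q, \gamma) = (q'', \gamma)$; and for a push with $\rho(q', \gamma')$ undefined, declare that control has escaped to the new level $(q', \gamma')$. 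Iterating $\mathrm{succ}$ from $(q_0, \bot)$ produces a trajectory in the finite set $\{(q, \gamma) : \rho(q, \gamma) \text{ undefined}\}$, which is therefore eventually periodic---either within a single stack level (bounded case) or across a periodic cycle of escapes (unboundedly growing but periodic stack case).

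From this trajectory I will read off $\TransPair$: $\TransPref$ encodes the prefix until periodicity sets in, and $\TransLoop$ encodes one full period. Each step of $\mathrm{succ}$ contributes $O(1)$ new productions built over existing $N_{q', \gamma'}$---either a single-step $\mu(q)\,\sigma$ or a concatenation $\mu(q)\,\sigma\,N_{q', \gamma'}$ for a returning push---so the total size is polynomial in $m$ and the construction runs in polynomial time. By construction, $\str{\TransPref} \cdot \str{\TransLoop}^\omega$ is precisely the transcript of the computation of $\A$.

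The hardest part, I expect, is the unboundedly-growing-stack case: there $\mathrm{succ}$ triggers an escape infinitely often, the sequence of escaped-to levels itself must be traced until it repeats, and $\TransLoop$ must encode an entire round of this outer level cycle together with all intra-level traffic. Pinning down the total SLP size at $O(m)$, rather than a looser polynomial, will also require aggressive reuse of the library $N_{q, \gamma}$ across the distinct levels visited. I also need to check the boundary compatibility with Lemma~\ref{l:transtochar}, ensuring that $\str{\TransLoop}$ contains an $a$; in the degenerate case where $\A$ eventually falls into a pure $\eps$-loop, the transcript has finitely many $a$s and the indicator pair for $L(\A)$ can instead be produced by a direct finite-state calculation.
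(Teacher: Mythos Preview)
Your two-phase plan---return summaries via least fixpoint, then trace the non-returning trajectory---is essentially the paper's approach repackaged. The paper's main stage computes, for each state $q$, the exit point $\fE(q)$ and the transcript $E_q$ of the return segment (your $\rho$ and $N$), detects cycles as non-returning, and then in a separate $\bot$-handling stage traces the orbit at stack level $\bot$. Two points in your write-up need sharpening.

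\textbf{The $\gamma$-parameter is redundant for the nonterminals, and this is what costs you the $O(m)$ bound.} Observe that whether $\rho(q,\gamma)$ is defined depends only on $q$ (it is defined iff $q$ is returning), and more importantly that $\str{N_{q,\gamma}}$ is independent of $\gamma$: the transcript of the return segment from $q$ is determined by $q$ alone, and the final pop contributes $\mu(q^*)\,\sigma(q^*)$ where $q^*$ is the exit point, again $\gamma$-free. Only the \emph{value} $\rho(q,\gamma)=\transpop(q^*,\gamma)$ depends on $\gamma$, and that is cheap control-flow data, not an SLP nonterminal. The paper exploits exactly this by indexing everything by $Q$ rather than $Q\times\StackAlph$: it records the exit point $\fE(q)$ and a single $E_q$ per state, and consults $\transpop$ only when needed. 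As you have it, the library has one $N_{q,\gamma}$ per pair, which after the normal form of Claim~\ref{c:udpdanf} is $|Q|\cdot|\StackAlph|=O(m)\cdot|\StackAlph|$, i.e.\ up to $O(m^2)$, not $O(m)$. Your ``aggressive reuse'' remark is on the right track, but the precise reuse is: collapse $N_{q,\gamma}$ to $N_q$. Once you do that, your phase~2 trajectory also has length $O(|Q|)$, since the state component of $\mathrm{succ}(q,\gamma)$ is $\gamma$-independent.

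\textbf{The sentence ``since $\bot$ cannot be popped, $\rho(q_0,\bot)$ is automatically undefined'' is false for your recursion.} Your base case $\pops{q}{q'}{\gamma}\Rightarrow\rho(q,\gamma)=q'$ fires for $\gamma=\bot$ as well, so if $q_0$ is returning (the computation reaches a pop state at the bottom), $\rho(q_0,\bot)$ \emph{is} defined. The semantics of the $\bot$-pop keeps $\bot$ on the stack, so the computation continues at $(\rho(q_0,\bot),\bot)$---which your phase~2 as written never enters. The paper devotes a separate $\bot$-handling stage to precisely this case, tracing the orbit of $q_0$ under $q\mapsto\transpop(\fE(q),\bot)$. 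The easy fix on your side is either to exclude $\gamma=\bot$ from the base case and add a clause $\mathrm{succ}(q,\bot)=(\transpop(q,\bot),\bot)$ for pop states $q$, or to keep $\rho(\cdot,\bot)$ and iterate it explicitly before entering the escape-tracing loop.
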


\noindent
The proof of Lemma~\ref{l:trans} is given in the next subsection.
Put together, Lemmas~\ref{l:transtochar} and~\ref{l:trans} prove
the harder direction (that is, part~\ref{th:ip:pdatoip}) of Theorem~\ref{th:ip}.
The only caveat is that if $\str\TransLoop \in \{ \fsym \}^*$, then one
needs to replace $\TransLoop$ with a simple SLP that generates $a$
and possibly adjust $\TransPref$ so that $\fsym$ be appended to the generated
word. This corresponds to the case where $\A$ does not read the entire input
and enters an infinite loop of $\eps$-moves (that is, moves that do not consume $a$
from the input).

\subsection{Details: proof of Lemma~\ref{l:trans}}
\label{ss:proof:translemma}

\subsubsection*{Returning and non-returning states.}
The main difficulty in proving Lemma~\ref{l:trans}
lies in capturing the structure of a unary deterministic computation.
To reason about such computations in a convenient manner,
we introduce the following definitions.

We say that a state $q$ is \df{returning} if it holds that
$(q, \bot) \confto^* (q', \bot)$ for some state $q' \in \Qpop$
(recall that states from $\Qpop$ are pop states).
In such a case the control state $q'$ of the first configuration
of the form $(q', \bot)$, $q' \in \Qpop$, occurring in the infinite
computation starting from $(q, \bot)$ is called the \df{exit point} of $q$,
and the computation between $(q, \bot)$ and this $(q', \bot)$
the \df{return segment} from $q$.
For example, if $q \in \Qpop$, then $q$ is its own exit point,
and the return segment from $q$ contains no moves.

Intuitively, the exit point is the first control state in the computation
where the bottom-of-the-stack symbol in the configuration $(q, \bot)$
may matter. One can formally show that
if $q'$ is the exit point of $q$, then for any configuration $(q, s)$
it holds that $(q, s) \confto^* (q', s)$ and, moreover, the transcript
of the return segment from $q$ is equal, for any $s$, to the transcript of the shortest
computation from $(q, s)$ to $(q', s)$.

If a control state is not returning, it is called \df{non-returning}.
For such a state $q$, it holds that for every configuration $(q', s')$
reachable from $(q, \bot)$ either $s' \ne \bot$ or $q' \not\in \Qpop$.
One can show formally that infinite computations starting from configurations
$(q, s)$ with a fixed non-returning state $q$ and arbitrary $s$
have identical transcripts and, therefore, identical characteristic sequences
associated with them. As a result, we can talk about infinite computations
starting at a non-returning control state $q$, rather than in a specific configuration $(q, s)$.

Now consider a state $q \not\in \Qpop$, an arbitrary configuration $(q, s)$
and the infinite computation starting from $(q, s)$.
Suppose that this computation enters a configuration of the form $(\bar q, s)$
after at least one move.
Then the \df{horizontal successor} of $q$ is defined as the control
state $\bar q$ of the first such configuration, and the computation
between these configurations is called the \df{horizontal segment} from $q$.
In other cases, horizontal successor and horizontal segment are undefined.
It is easily seen that the horizontal successor, whenever it exists, is well-defined
in the sense that it does not depend upon the choice
of $s \in (\StackAlph \setminus \{\bot\})^* \bot$. Similarly, the choice of $s$
only determines the ``lower'' part of the stack in the configurations of the
horizontal segment; since we shall only be interested in the transcripts, this
abuse of terminology is harmless.

Equivalently, suppose that $q \not\in \Qpop$ and $(q, s) \confto (q', s')$.
If $s' = s$ then the horizontal successor of $q$ is $q'$. Otherwise
it holds that $\pushes{q}{q'}{\gamma}$ for some $\gamma \in \StackAlph$, so that
$s' = \gamma s$. Now if $q'$ is returning, $q''$ is the exit point of $q'$,
and $\pops{q''}{\bar q}{\gamma}$ for the same $\gamma$, then $\bar q$ is the
horizontal successor of $q$. The horizontal segment is in both cases
defined as the shortest non-empty computation
of the form $(q, s) \confto^* (\bar q, s)$.%
\mynote{Diagrams in a later version?}

\vspace{-3ex}
\subsubsection*{General approach and data structures.}
Recall that our goal in this subsection is to define an algorithm
that constructs a pair of straight-line programs $\TransPair$
generating the transcript of the infinite computation of $\A$.
The approach that we take is dynamic programming.
We separate out intermediate goals of several kinds
and construct, for an arbitrary control state $q \in Q$,
SLPs and pairs of SLPs that generate transcripts
of the infinite computation starting at $q$ (if $q$ is non-returning),
of the return segment from $q$ (if $q$ is returning), and
of the horizontal segment from $q$ (whenever it is defined).

Our algorithm will \df{write} productions as it runs, always using,
on their right-hand side, only terminal symbols from $\HistAlph$ and
nonterminals defined by productions written earlier.
This enables us to use the notation $\str A$ for nonterminals $A$
without referring to a specific SLP.
Once written, a production is never modified.

The main data structures of the algorithm, apart from the productions
it writes, are as follows: three partial functions $\fE, \fH, \fW \colon Q \to Q$ and
a subset $\NonRet \sset Q$. Associated with $\fE$ and $\fH$ are nonterminals
$E_q$ and $H_q$, and with $\NonRet$ nonterminals $N'_q$ and $N''_q$.

Note that the partial functions from $Q$ to $Q$ can be thought of as digraphs on the set of
vertices $Q$. In such digraphs the outdegree of every vertex is at
most~$1$. The algorithm will subsequently modify these partial functions, that is,
add new edges and/or remove existing ones (however, the outdegree of no vertex
will ever be increased to above $1$).
We can also promise that $\fE$ will only increase, i.\,e., its graph will only
get new edges, $\fW$ will only decrease, and $\fH$ can go both ways.

During its run the algorithm will maintain the following invariants:

\begin{enumerate}

\renewcommand{\theenumi}{\textup{(}I\arabic{enumi}\textup{)}}
\renewcommand{\labelenumi}{\theenumi}

\item
\label{i:part}
{}$Q = \dom \fE \sqcup \dom \fH \sqcup \dom \fW \sqcup \NonRet$,
where $\sqcup$ denotes union of disjoint sets.

\item
\label{i:exit}
Whenever $\fE(q) = q'$, it holds that $q$ is returning, $q'$ is the exit point of $q$,
and $\str{E_q}$ is the transcript of the return segment from $q$.

\item
\label{i:horiz}
Whenever $\fH(q) = q'$, it holds that $q'$ is the horizontal successor of $q$ and
$\str{H_q}$ is the transcript of the horizontal segment from $q$.

\item
\label{i:push}
Whenever $\fW(q) = q'$, it holds that $\pushes{q}{q'}{\gamma}$
for some $\gamma \in \StackAlph$.

\item
\label{i:nonret}
Whenever $q \in \NonRet$, it holds that $q$ is non-returning
and the sequence $\str{N'_q} \cdot (\str{N''_q})^\omega$ is
the transcript of the infinite computation starting at $q$.

\end{enumerate}
\newcommand{\allinvs}{\ref{i:part}--\ref{i:nonret}}

\vspace{-3ex}
\subsubsection*{Description of the algorithm: computing transcripts.}
Our algorithm has three stages: the initialization stage, the main stage,
and the $\bot$-handling stage.
The initialization stage of the algorithm works as follows:

\begin{itemize}

\renewcommand{\labelitemi}{---}

\item for each $q \in Q$, write $V_q \prodto \mu(q) \sigma(q)$,
where $\mu(q)$ is $\fsym$ if $q \in F$ and $\eps$ otherwise,
and $\sigma(q)$ is $a$ if $q \in R$ (that is, if transitions
departing from~$q$ read a symbol from the input) and $\eps$ otherwise;

\item for all $q \in \Qpop$, set $\fE(q) = q$ and write $E_q \prodto \eps$;

\item for all $q \in \Qint$, set $\fH(q) = q'$ where $\ints{q}{q'}$ and write $H_q \prodto V_q$;

\item for all $q \in \Qpush$, set $\fW(q) = q'$ where $\pushes{q}{q'}{\gamma}$ for some $\gamma \in \StackAlph$;

\item set $\NonRet = \emptyset$.

\end{itemize}

\noindent
It is easy to see that in this way all invariants~\allinvs{} are initially satisfied
(recall that the transcript of an empty computation is $\eps$).


For convenience, we also introduce two auxiliary objects:
a partial function $\fG \colon Q \to Q$ and nonterminals $G_q$,
defined as follows.
The domain of $\fG$ is $\dom \fG = \dom \fH \sqcup \dom \fW$;
note that, according to invariant~\ref{i:part}, this union is disjoint.
We assign $\fG(q) = q'$ iff $\fH(q) = q'$ or $\fW(q) = q'$.
We shall assume that $\fG$ is recomputed as $\fH$ and $\fW$ change.
Now for every $q \in \dom \fG$, we let $G_q$ stand for $H_q$ if $q \in \dom \fH$
and for $V_q$ if $q \in \dom \fW$.

At this point we are ready to describe the main stage of the algorithm.
During this stage, the algorithm applies the following rules
until none of them is applicable (if at some point several rules
can be applied, the choice is made arbitrarily; the rules are
well-defined whenever invariants~\allinvs{} hold):

\begin{enumerate}

\renewcommand{\theenumi}{\textup{(}R\arabic{enumi}\textup{)}}
\renewcommand{\labelenumi}{\theenumi}

\item
\label{r:nonret}
If $\fG(q) = q'$ where $q' \in \NonRet$ and $q \in Q$:
remove $q$ from either $\dom \fH$ or $\dom \fW$,
add $q$ to $\NonRet$,
write $N'_q \prodto G_q N'_{q'}$ and $N''_q \prodto N''_{q'}$.

\item
\label{r:add}
If $\fH(q) = q'$ where $q' \in \dom \fE$ and $q \in Q$:
remove $q$ from $\dom \fH$,
define $\fE(q) = \fE(q')$,
write $E_q \prodto H_q E_{q'}$.

\item
\label{r:compose}
If $\fW(q) = q'$ where $q' \in \dom \fE$ and $q \in Q$:
remove $q$ from $\dom \fW$,
define $\fH(q) = \bar q$ where
$\fE(q') = q''$, $\pops{q''}{\bar q}{\gamma}$,
and $\pushes{q}{q'}{\gamma}$ (that is, $\gamma$ is the symbol
pushed by the transition leaving $q$, and $\bar q$ is
the state reached by popping $\gamma$ at $q''$, the exit point of $q'$).
Finally, write $H_q \prodto V_q E_{q'} V_{q''}$.

\item
\label{r:cycle}
If $\fG$ contains a simple cycle, that is,
if $\fG(q_i) = q_{i + 1}$ for $i = 1, \ldots, k - 1$
and $\fG(q_k) = q_1$, where $q_i \ne q_j$ for $i \ne j$,
then for each vertex $q_i$ of the cycle
remove it from either $\dom \fH$ or $\dom \fW$ and
add it to $\NonRet$;
in addition, write $N'_{q_k} \prodto G_{q_k}$,
$N''_{q_k} \prodto G_{q_1} \ldots G_{q_k}$,
and, for each $i = 1, \ldots, k - 1$,
$N'_{q_i} \prodto G_{q_i} N'_{q_{i + 1}}$
and $N''_{q_i} \prodto N''_{q_{i + 1}}$.

\end{enumerate}
\newcommand{\allrules}{\ref{r:nonret}--\ref{r:cycle}}

\noindent
We shall need two basic facts about this stage of the algorithm.

\begin{myclaim}
\label{c:invhold}
Application of rules~\allrules{} does not violate invariants~\allinvs.
\end{myclaim}

\noindent
The proof of Claim~\ref{c:invhold} is easy and left to the reader.

\begin{myclaim}
\label{c:term}
If no rule is applicable, then $\dom \fG = \emptyset$.
\end{myclaim}

\begin{proof}
Suppose $\dom \fG \ne \emptyset$. Consider the graph associated with $\fG$
and observe that all vertices in $\dom \fG$ have outdegree~$1$.
This implies that $\fG$ has either a cycle within $\dom \fG$ or an edge
from $\dom \fG$ to $Q \setminus \dom \fG$. In the first case,
rule~\ref{r:cycle} is applicable. In the second case,
we conclude with the help of the invariant~\ref{i:part} that
the edge leads from a vertex in $\dom \fH \sqcup \dom \fW$
to a vertex in $\NonRet \sqcup \dom \fE$. If the destination is
in $\NonRet$, then rule~\ref{r:nonret} is applicable;
otherwise the destination is in $\dom \fE$ and one can apply
rule~\ref{r:add} or rule~\ref{r:compose} according to
whether the source is in $\dom \fH$ or in $\dom \fW$.
\qed
\end{proof}

\noindent
Now we are ready to describe the $\bot$-handling stage of the algorithm.
By the beginning of this stage, the structure of deterministic computation
of $\A$ has already been almost completely captured by the productions written
earlier, and it only remains to account for moves involving $\bot$.
So this last stage of the algorithm takes the initial state $q_0$ of $\A$ and proceeds as follows.

If $q_0 \in \NonRet$, then take $N'_{q_0}$ as the axiom of $\TransPref$
and $N''_{q_0}$ as the axiom of $\TransLoop$. By invariant~\ref{i:nonret},
these nonterminals are defined and generate appropriate words,
so the pair $\TransPair$ indeed generates the transcript of the computation
of $\A$.

Since at the beginning of the $\bot$-handling stage
$\dom \fG = \emptyset$, it remains to consider the case $q_0 \in \dom \fE$.
Define a partial function $\fEbot \colon Q \to Q$ by setting,
for each $q \in \dom \fE$, its value according to $\fEbot(q) = \bar q$
if $\fE(q) = q'$ and $\pops{q'}{\bar q}{\bot}$. Write productions
$\Ebot_q \prodto E_q V_{q'}$ accordingly. Now associate $\fEbot$
with a graph, as earlier, and consider the longest simple path within $\dom \fEbot$ starting at
$q_0$. Suppose it ends at a vertex $q_k$, where $\fEbot(q_i) = q_{i + 1}$
for $i = 0, \ldots, k$. There are two subcases here according to why
the path cannot go any further.

The first possible reason is that it reaches $Q \setminus \dom \fEbot = \NonRet$,
that is, that $q_{k + 1}$ belongs to $\NonRet$. In this subcase write
$N'_{q_0} \prodto \Ebot_{q_0} \ldots \Ebot_{q_{k}} N'_{q_{k + 1}}$
and $N''_{q_0} \prodto N''_{q_{k + 1}}$.
The second possible reason is that $q_{k + 1} = q_i$ where $0 \le i \le k$.
In this subcase write $N'_{q_0} \prodto \Ebot_{q_0} \ldots \Ebot_{q_{i - 1}}$
and $N''_{q_0} \prodto \Ebot_{q_i} \ldots \Ebot_{q_k}$.

In any of the two subcases above, take $N'_{q_0}$ and $N''_{q_0}$ as axioms
of $\TransPref$ and $\TransLoop$, respectively. The correctness of this step follows
easily from the invariants~\ref{i:exit} and~\ref{i:nonret}.
This gives a polynomial algorithm that converts a udpda $\A$
into a pair of SLPs $\TransPair$ that generates
the transcript of the infinite computation of $\A$,
and the only remaining bit is bounding the size of $\TransPair$.

\begin{myclaim}
The size of $\TransPair$
is $O(|Q|)$.
\end{myclaim}

\begin{proof}
There are three types of nonterminals whose productions may have growing size:
$N''_{q_k}$ in rule~\ref{r:cycle}, and $N'_{q_0}$ and $N''_{q_0}$ in the $\bot$-handling
stage. For all three types, the size is bounded by the cardinality of the set of states involved in a cycle
or a path. Since such sets never intersect, all such nonterminals together
contribute at most $|Q|$ productions to the Chomsky normal form. The contribution
of other nonterminals is also $O(|Q|)$, because they all have fixed size and
each state $q$ is associated with a bounded number of them.
\qed
\end{proof}

\noindent
Combined with Claim~\ref{c:udpdanf} in Subsection~\ref{ss:proof:pre},
this completes the proof of Lemma~\ref{l:trans} and Theorem~\ref{th:ip}.

\section{Universality of unpda}
\label{s:nondet}

In this section we settle the complexity status of the universality problem
for unary, possibly nondeterministic pushdown automata.
While $\PiTwoP$-completeness of equivalence and inclusion is
shown by Huynh~\cite{huynh}, it has been unknown whether the universality
problem is also $\PiTwoP$-hard.

For convenience of notation, we use an auxiliary descriptional system.
Define \df{integer expressions} over the set of operations $\intexpop$
inductively: the base case is a non-negative integer $n$, written
in binary, and the inductive step is associated with binary operations
$+$, $\intor$, and unary operations $\intdoubled$,~$\intiter{}$.
To each expression $E$ we associate
a set of non-negative integers $\intsem E$: $\intsem{n} = \{ n \}$,
$\intsem{E_1 + E_2} = \{ s_1 + s_2 \colon s_1 \in \intsem{E_1}, s_2 \in \intsem{E_2} \}$,
$\intsem{E_1 \intor E_2} = \intsem{E_1} \cup \intsem{E_2}$,
$\intsem{E \intdoubled} = \intsem{E + E}$,
$\intsem{\intiter E} = \{ s k \colon s \in \intsem E, k = 0, 1, 2, \ldots \,\}$.

Expressions $E_1$ and $E_2$ are called \df{equivalent} iff $\intsem{E_1} = \intsem{E_2}$;
an expression $E$ is \df{universal} iff it is equivalent to $\intiter 1$.
The problem of deciding universality is denoted by \IntExpUnivers.

Decision problems for integer expressions have been studied for more than 40~years:
Stockmeyer and Meyer~\cite{sm73} showed that for expressions over $\{+,\intor\}$
compressed membership is $\NP$-complete and equivalence is $\PiTwoP$-complete
(universality is, of course, trivial). For recent results on such problems
with operations from $\{+, \cup, \cap, \times, \overline{\phantom{x}}\}$,
see McKenzie and Wagner~\cite{mw07} and Gla\ss{}er et~al.~\cite{ghrtw}.

\begin{lemma}
\label{l:intexp}
\IntExpUnivers\ is $\PiTwoP$-hard.
\end{lemma}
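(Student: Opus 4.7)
The plan is to reduce from $\forall\exists$-3SAT, the canonical $\PiTwoP$-complete problem. Given a 3CNF formula $\phi(X,Y)$ over variable vectors $X = (x_1, \ldots, x_k)$ and $Y = (y_1, \ldots, y_\ell)$, the problem is to decide whether $\forall X \in \{0,1\}^k \exists Y \in \{0,1\}^\ell \, \phi(X, Y)$. From any such $\phi$, I plan to construct in polynomial time an integer expression $E$ over $\intexpop$ such that $\intsem{E} = \mathbb{N}$ iff the quantified formula is true.

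The first and most delicate step will be to build an auxiliary expression $F$ using only $+$ and $\intor$ whose semantics, intersected with $\{0, \ldots, 2^k - 1\}$, equals $S_{\mathrm{sat}} := \{X : \exists Y . \phi(X, Y)\}$. Here I would adapt the Stockmeyer--Meyer technique underlying their $\PiTwoP$-hardness of equivalence for $\{+, \intor\}$-expressions~\cite{sm73}. The expression $F$ will use $\intor$ to nondeterministically guess the bits of $X$ and $Y$ together with, for each clause of $\phi$, the index of a literal declared the \emph{witness} that satisfies that clause; it will use $+$ to aggregate contributions into distinct digit slots of a sufficiently large base $B$. The wiring is such that consistent satisfying configurations---those where each witness literal is actually true under the guessed $(X, Y)$---produce an output equal to the integer encoding of $X$ (lying in $\{0, \ldots, 2^k - 1\}$), while any inconsistent or unsatisfying configuration leaves a nonzero contribution in some high-order slot and therefore produces an output exceeding $2^k$.

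Given $F$, I would next build $E_{2^k}$ of size $O(k)$ generating $\{2^k\}$ by applying $\intdoubled$ exactly $k$ times to a singleton $\{1\}$, and set
\begin{equation*}
E = F + \intiter{E_{2^k}}.
\end{equation*}
Its semantics is $\intsem{E} = \bigcup_{s \in \intsem{F}} (s + 2^k \mathbb{N})$. For any $r < 2^k$, the equation $r = s + 2^k k'$ with $s, k' \ge 0$ forces $s = r$ and $k' = 0$, so $r \in \intsem{E}$ iff $r \in \intsem{F}$; any $n \ge 2^k$ is covered as soon as $n \bmod 2^k \in \intsem{F}$. Therefore $\intsem{E} = \mathbb{N}$ iff $\{0, \ldots, 2^k - 1\} \subseteq \intsem{F}$, which by the construction of $F$ is equivalent to $S_{\mathrm{sat}} = \{0, \ldots, 2^k - 1\}$, i.e., to the QBF holding. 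This yields a logspace reduction from $\forall\exists$-3SAT to \IntExpUnivers.

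The main obstacle will be the construction of $F$: without intersection or subtraction, every algebraic constraint has to be realized from monotone operations alone. The plan is to pick the base $B$ larger than the largest possible per-slot accumulation (so no carries can occur) and to reserve per-variable digit slots in which the consistency requirement ``variable $v$ commits to value $b$, and every clause that nominates $v$ as its witness demands exactly $b$'' becomes a digit-level equality that holds precisely when the guessed configuration genuinely certifies $\exists Y . \phi(X, Y)$; any other guess must pollute some slot and be forced above $2^k$. Once $F$ is in hand, the remaining combination with $\intiter{E_{2^k}}$ is routine.
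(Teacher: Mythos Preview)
Your outer scaffolding is fine: setting $E=F+\intiter{E_{2^k}}$ and arguing that $\intsem E=\mathbb N$ iff $\{0,\ldots,2^k-1\}\subseteq\intsem F$ is correct. The fatal gap is in the promised construction of $F$ with $\intsem F\cap\{0,\ldots,2^k-1\}=S_{\mathrm{sat}}$.

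No such $F$ can be built in polynomial time unless $\P=\NP$. For every expression $G$ over $\intexpop$, the number $\min\intsem G$ is computable in polynomial time by the obvious recursion ($\min$ distributes over $+$, $\intor$, and $\intdoubled$, and $\min\intsem{\intiter G}=0$). Hence the question ``is $\intsem{F}\cap\{0,\ldots,2^k-1\}$ nonempty?''\ is in $\P$: just test whether $\min\intsem F<2^k$. But under your intended correspondence this question is ``is $S_{\mathrm{sat}}$ nonempty?'', i.e.\ ``$\exists X\,\exists Y.\,\phi(X,Y)$'', which is $\NP$-hard. Your informal description already exhibits the failure mode: since all contributions are nonnegative, landing below $2^k$ forces \emph{every} high-order term to be zero; but in a sum of independent $\intor$-choices nothing can couple the witness nominated for a clause to the truth value guessed for the underlying variable, so the ``all high-order terms zero'' option is always available regardless of whether the guessed $(X,Y)$ actually satisfies $\phi$. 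The ``digit-level equality'' you invoke needs a nonzero external target to compare against, and there is none.

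The paper avoids this trap by reducing from \GenSubsetSum\ instead of $\forall\exists$-3SAT and, crucially, by \emph{not} asking the good configurations to produce small numbers. It places $2^m$ test points of the form $kM+t$, with a fixed positive target $t$ supplied by the instance, uses one subexpression $E'$ to cover all of $\mathbb N$ except those test points, and lets a second $\{+,\intor\}$-subexpression $E''$ hit $kM+t$ exactly when the subset-sum equation for the $k$-th universally quantified vector has a solution. Because the per-block target $t$ is a fixed nonzero value rather than zero, the minimum argument above does not apply.
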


\begin{proof}
The reduction is from the \GenSubsetSum\ problem,
which is defined as follows. The input consists of two vectors of naturals,
$u = (u_1, \ldots, u_n)$ and $v = (v_1, \ldots, v_m)$, and a natural $t$,
and the problem is to decide whether for all $y \in \Bin^m$ there exists
an $x \in \Bin^n$ such that $x \conv u + y \conv v = t$, where the middle dot $\conv$ once
again denotes the inner product. This problem was shown to be hard by Berman
et~al.~\cite[Lemma~6.2]{bklpr}.\mynote{Reference to the compendium: in the intro or here.}
\par
Start with an instance of \GenSubsetSum\ and
let $M$ be a big number, $M > \sum_{i = 1}^{n} u_i + \sum_{j = 1}^{m} v_j$.
Assume without loss of generality that $M > t$.
Consider the integer expression $E$ defined by the following equations:
\begin{align*}
E &= E' \intor E'', \\
E' &= (2^m M + \intiter 1) \intor (\intiter M + ([0, t - 1] \intor [t + 1, M - 1])),\\
E'' &= \sum_{j = 1}^{m} (0 \intor (2^{j - 1} M + v_j)) +
       \sum_{i = 1}^{n} (0 \intor u_i), \\
[a, b] &= a + [0, b - a], \\
[0, t] &= [0, \lfloor t/2 \rfloor] \intdoubled + (0 \intor (t \bmod 2)), \\
[0, 1] &= 0 \intor 1, \\
[0, 0] &= 0.
\end{align*}
Note that the size of $E$ is polynomial in the size of the input,
and $E$ can be constructed in logarithmic space.
We show that $E$ is universal iff the input is a yes-instance
of \GenSubsetSum.
\par
It is immediate that $E$ is universal if and only if $\intsem E$ contains $2^m$
numbers of the form $k M + t$, $0 \le k < 2^m$.
We show that every such number is in $\intsem E$ if and only if for
the binary vector $y = (y_1, \ldots, y_m) \in \Bin^m$,
defined by $k = \sum_{j = 1}^{m} y_j \, 2^{j - 1}$,
there exists a vector $x \in \Bin^n$ such that $x \conv u + y \conv v = t$.
\par
First consider an arbitrary $y \in \Bin^m$ and
choose $k$ as above.
Suppose that for this $y$ there exists an $x \in \Bin^n$ such that
$x \conv u + y \conv v = t$. One can easily see that appropriate choices
in $E''$ give the number $k M + y \conv v + x \conv u = k M + t$.
Conversely, suppose that $k M + t \in \intsem E$ for some $k$, $0 \le k < 2^m$;
then $k M + t \in \intsem{E''}$. Since $(1, \ldots, 1) \conv u + (1, \ldots, 1) \conv v < M$,
it holds that $t = y \conv v + x \conv u$ for binary vectors $y \in \Bin^m$
and $x \in \Bin^n$ that correspond to the choices in the addends.
Moreover, the same inequality also shows that $k M$ is equal to
the sum of some powers of two in the first sum in $E''$, and so
$k = \sum_{j = 1}^{m} y_j \, 2^{j - 1}$.
This completes the proof.
\qed
\end{proof}

\begin{remark*}
With \emph{circuits} instead of formulae (see also~\cite{mw07} and~\cite{ghrtw}) we would not need doubling.
Furthermore, we only use $\intiter{}$ on fixed numbers, so instead we could use
any feature for expressing an arithmetic progression with fixed
common difference.
\end{remark*}

\begin{theorem}
\label{th:nondet}
\NondetUnivers\ is $\PiTwoP$-complete.
\end{theorem}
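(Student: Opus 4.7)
The $\PiTwoP$ upper bound is immediate from Huynh's $\PiTwoP$ upper bound for \Equiv\ of unpda: to test $L(\A) \questeq \aalph^*$, compare $\A$ against a trivial unpda accepting $\aalph^*$. For hardness, my plan is to reduce from \IntExpUnivers, which is $\PiTwoP$-hard by Lemma~\ref{l:intexp}.

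The reduction should convert, in logarithmic space, an integer expression $E$ over $\intexpop$ into a unary nondeterministic pda $\A_E$ of polynomial size such that $L(\A_E) = \{a^n : n \in \intsem E\}$. Once this is achieved, $E$ is universal iff $\intsem E = \intsem{\intiter 1} = \mathbb{N}$ iff $L(\A_E) = \aalph^*$. I would build $\A_E$ by structural induction on $E$: for a binary numeric constant $c$, use a stack-based counting gadget of size $O(\log c)$ accepting $\{a^c\}$ (for instance, by feeding a doubling SLP generating $0^{c-1} 1$ through Lemma~\ref{l:slptopda}); for $E_1 \intor E_2$, attach a fresh initial state with $\eps$-transitions to both sub-automata; for $E_1 + E_2$, link the accepting states of $\A_{E_1}$ to the initial state of $\A_{E_2}$ via $\eps$-transitions.

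The subtle cases are $\intdoubled$ and $\intiter$. The general $\intiter E$ is tricky because $\intsem{\intiter E} = \{sk : s \in \intsem E, k \ge 0\}$ forces the \emph{same} $s$ to be shared across all $k$ iterations; however, inspection of Lemma~\ref{l:intexp}'s reduction shows that $\intiter$ is only ever applied to numeric constants ($1$ and $M$), so it suffices to handle $\intiter c$ for $c$ in binary, which I realize by a pda that nondeterministically loops through a size-$O(\log c)$ gadget reading exactly $c$ many $a$s per iteration. For $\intdoubled$, the naive construction that replaces each ``read-$a$'' transition by two transitions in series blows up exponentially when $\intdoubled$ is nested, because the $[0, t]$ subexpression nests it to depth $\Theta(\log t)$. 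The remedy is to precompute, for every ``read-$a$'' transition arising from the induction, the total number $d$ of $\intdoubled$ operators enclosing it in the syntax tree, and to replace the single read by a stack-based counting gadget of size $O(d)$ that consumes exactly $2^d$ input symbols. Since $d \le |E|$ and the total number of such transitions remains polynomial in $|E|$, the size of $\A_E$ stays polynomial.

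The overall construction runs in logarithmic space by a straightforward traversal of the syntax tree, maintaining only the current node together with the depth of enclosing $\intdoubled$ operators. Correctness is by structural induction, asserting for each subexpression $F$ that $L(\A_F) = \{a^n : n \in \intsem F\}$. The principal obstacle is polynomial handling of nested $\intdoubled$, which the depth-parameterized counting gadget is designed to resolve; once that is in place, $\PiTwoP$-hardness of \NondetUnivers\ follows immediately from Lemma~\ref{l:intexp}.
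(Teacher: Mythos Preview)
Your reduction from \IntExpUnivers\ is the right idea and matches the paper's approach, but your treatment of $\intdoubled$ is semantically wrong, not merely inefficient. The operation $E \intdoubled$ denotes $\intsem{E + E} = \{s + s' : s, s' \in \intsem E\}$, where the two summands are chosen \emph{independently}. Your ``naive construction'' and its remedy both compute $\{2s : s \in \intsem E\}$ instead, since pushing the doubling down to each read-$a$ transition forces the same nondeterministic choices to be replayed. These coincide only when $\intsem E$ is a singleton; but in Lemma~\ref{l:intexp} the operator $\intdoubled$ is applied to the interval expression $[0,\lfloor t/2\rfloor]$, whose value is $\{0,1,\ldots,\lfloor t/2\rfloor\}$. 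Concretely, for $t=5$ your automaton would accept $\{a^0,a^1,a^4,a^5\}$ in place of $\{a^0,\ldots,a^5\}$, so the correctness invariant $L(\A_F)=\{a^n : n\in\intsem F\}$ fails.

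The paper sidesteps both the blowup and the semantic issue by going through a unary context-free grammar rather than building the pda directly: each subexpression becomes a nonterminal, $+$ becomes concatenation, $\intor$ becomes an alternative, $\intdoubled$ becomes the single production $N \prodto A\,A$ (two independent derivations of the \emph{same} nonterminal, hence no copying and the correct sumset semantics), and $\intiter{}$ becomes $N' \prodto \eps \mid N\,N'$ (so no restriction to constant arguments is needed). The grammar is then converted to a pda by the textbook construction. If you prefer to stay at the pda level, the analogous fix is a stack-based ``procedure call'': for $E \intdoubled$, push a marker, run the single copy of $\A_E$, pop the marker, and run $\A_E$ again; nesting of $\intdoubled$ then costs one stack symbol per level rather than doubling the state set.
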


\begin{proof}
A reduction from \IntExpUnivers, which is $\PiTwoP$-hard
by Lemma~\ref{l:intexp}, shows hardness.
Indeed, an integer expression over $\intexpop$ can be transformed
into a unary CFG in a straightforward way. Binary numbers are encoded
by poly-size SLPs, summation is modeled by concatenation, and union by alternatives.
Doubling is a special case of summation, and $\intiter{}$ gives rise to
productions of the form $N' \prodto \eps$ and $N' \prodto N N'$. The obtained
CFG is then transformed into a unary PDA $\A$ by a standard algorithm
(see, e.\,g.,
Savage~\cite[Theorem~4.12.1]{savage}).
The result is that $L(\A) = \{ 1^s \colon s \in \intsem E \}$, and $\A$ is computed from $E$ in
logarithmic space.
This concludes the proof.
\qed
\end{proof}


\begin{remark*}
We give a simple proof of the $\PiTwoP$ upper bound.
Let $\varphi_{\A}(x)$ be an existential Presburger formula of size
polynomial in the size of $\A$ that characterizes the Parikh image of $L(\A)$
(see Verma, Seidl, and Schwentick~\cite[Theorem~4]{vss}).
To show that an udpda $\A$ is non-universal, we find an $n\geq 0$
such that $\lnot \varphi_{\A}(n)$ holds.
Now we note that for any udpda $\A$ of size $m$, there is a deterministic finite
automaton
of size $2^{O(m)}$ accepting $L(\A)$ (see discussion in Section~\ref{s:disc} and Pighizzini~\cite{pudpda}).
Thus, $n$ is bounded by $2^{O(m)}$.
Therefore, checking non-universality can be expressed as a
predicate:
$\exists n \leq 2^{O(m)}.\lnot \varphi_{\A}(n)$.
This is a $\SigmaTwoP$-predicate, because the $\exists^*$-fragment
of Presburger arithmetic is $\NP$-complete~\cite{vzgs78}.
\end{remark*}

\begin{corollary}
\label{cor:nondet}
Universality, equivalence, and inclusion are $\PiTwoP$-complete
for (possibly nondeterministic) unary pushdown automata, unary context-free grammars,
and integer expressions over $\intexpop$.
\end{corollary}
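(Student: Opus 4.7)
The plan is to assemble the corollary from pieces already available, using trivial reductions among the three decision problems within each descriptional system together with standard logspace translations between systems. Within each of the three settings, universality reduces in logarithmic space to both equivalence and inclusion by comparing the given input against a fixed tiny object that denotes $\aalph^*$, namely a two-state unpda, a single-nonterminal grammar, or the expression $\intiter 1$. Thus $\PiTwoP$-hardness of universality in any one system immediately propagates to $\PiTwoP$-hardness of equivalence and inclusion in that same system.

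For the lower bounds, I would invoke Theorem~\ref{th:nondet} for unpda and Lemma~\ref{l:intexp} for integer expressions over $\intexpop$. For unary context-free grammars, the proof of Theorem~\ref{th:nondet} already contains exactly what is needed: the logspace reduction from \IntExpUnivers\ first produces a unary CFG $G$ with $L(G) = \{a^s : s \in \intsem{E}\}$ and only then converts $G$ into a unpda. Since $G$ is universal iff $E$ is, this intermediate step alone witnesses $\PiTwoP$-hardness of universality for unary CFGs, and the previous paragraph delivers the same bound for equivalence and inclusion.

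For the upper bounds, I would cite Huynh~\cite{huynh} for the $\PiTwoP$ upper bounds on equivalence and inclusion of unpda, and the remark immediately preceding the corollary for the corresponding bound on universality. The other two systems translate into unpda in logarithmic space, unary CFGs by the classical top-down simulation (see Savage~\cite[Theorem~4.12.1]{savage}) and integer expressions over $\intexpop$ by the route of Theorem~\ref{th:nondet}; both translations preserve the accepted unary language, so the $\PiTwoP$ upper bounds transfer unchanged. The only point worth checking carefully is that the CFG-to-PDA construction fits in logarithmic space rather than merely polynomial time, which is standard: a single PDA state and one transition per production suffice, and these transitions can be enumerated with only a logarithmic-size pointer into the grammar.
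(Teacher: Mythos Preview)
Your proposal is correct and matches the paper's implicit reasoning: the paper states Corollary~\ref{cor:nondet} without proof, treating it as immediate from Theorem~\ref{th:nondet}, Lemma~\ref{l:intexp}, Huynh~\cite{huynh}, and the logspace translations already exhibited in the proof of Theorem~\ref{th:nondet}. You have identified exactly these ingredients and the standard reductions (universality to equivalence and inclusion via a fixed universal object; CFGs and integer expressions to unpda) that glue them together, including the observation that the intermediate CFG in the proof of Theorem~\ref{th:nondet} already yields hardness for grammars.
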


\noindent
Another consequence of Theorem~\ref{th:nondet} is that
deciding equality of a (not necessarily unary) context-free language,
given as a context-free grammar, to any fixed context-free language $L_0$
that contains an infinite regular subset, is $\PiTwoP$-hard and,
if $L_0 \sset \aalph^*$, $\PiTwoP$-complete.
The lower bound is by reduction
due to Hunt~III, Rosenkrantz, and Szymanski~\cite[Theorem~3.8]{hrs},
who show that deciding equivalence to $\aalph^*$ reduces to deciding
equivalence to any such $L_0$. The reduction is shown to be polynomial-time,
but is easily seen to be logarithmic-space as well.
The upper bound for the unary case is by Huynh~\cite{huynh};
in the general case, the problem can be undecidable.

\section{Corollaries and discussion}
\label{s:disc}

\subsubsection*{Descriptional complexity aspects of udpda.}

Theorem~\ref{th:ip} can be used to obtain
several results on descriptional complexity aspects of udpda
proved earlier by Pighizzini~\cite{pudpda}.
He shows how to transform a udpda of size $m$
into an equivalent deterministic finite automaton (DFA) with at most $2^m$ states~\cite[Theorem~8]{pudpda}
and into an equivalent context-free grammar in Chomsky normal form (CNF) with
at most $2 m + 1$ nonterminals~\cite[Theorem~12]{pudpda}.
In our construction $m$ gets multiplied by a small constant, but the advantage is
that we now see (the slightly weaker variants of) these results as easy
corollaries of a single underlying theorem.
Indeed, using
an indicator pair $\Pair$ for $L$, it is straightforward to
construct a DFA of size $|\str\Pref| + |\str\Loop|$ accepting $L$,
as well as to transform the pair into a CFG in CNF that generates $L$ and
has at most thrice the size of $\Pair$.

Another result which follows, even more directly, from ours
is a lower bound on the size of udpda accepting a specific
language~$L_1$~\cite[Theorem~15]{pudpda}.
%
%
To obtain this lower bound,
Pighizzini employs a known lower bound on the SLP-size of the word
$W = W\![0] \ldots W\![K - 1] \in \Bin^K$ such that
$a^n \in L_1$ iff $W\![n \bmod K] = 1$. To this end, a udpda $\A$ accepting $L_1$
is intersected (we are glossing over some technicalities here) with a small
deterministic finite automaton that ``captures'' the end of the word $W$.
The obtained udpda, which only accepts $a^K$, is transformed into an equivalent
context-free grammar. It is then possible to use the structure of the grammar
to transform it into an SLP that produces $W$ (note that such a transformation
in general is $\NP$-hard).
While the proof produces from a udpda for $L_1$ a related SLP with a polynomial blowup,
this construction depends crucially on the structure of the language $L_1$,
so it is difficult to generalize the
argument to \emph{all} udpda and thus obtain Theorem~\ref{th:ip}.
Our proof of Theorem~\ref{th:ip} therefore follows a very different path.


\vspace{-3ex}
\subsubsection*{Relationship to Presburger arithmetic.}
An alternative way to prove the upper bound in Theorem~\ref{th:incl} 
is via Presburger arithmetic, using the observation that there is a poly-time computable
existential Presburger formula
that expresses the membership of a word $a^n$ in $L(\lnot \A_1)$ and $L(\A_2)$.
This technique distills the arguments
used by Huynh~\cite{huynhcomm,huynh} to show that the compressed membership
problem for unary pushdown automata is in $\NP$.
It is used in a purified form by Plandowski and Rytter~\cite[Theorems~4 and~8]{prjewels},
who developed a much shorter proof of the same fact (apparently unaware of the previous proof).
The same idea was later rediscovered and used in a combination with Presburger arithmetic
by Verma, Seidl, and Schwentick~\cite[Theorem~4]{vss}.

Another application of this technique provides
an alternative proof of the $\PiTwoP$ upper bound
for unpda inclusion (Theorem~\ref{th:nondet}):
to show that $L(\A)$ is universal, we check that $L(\A)$ accepts all words
up to length $2^{O(m)}$ (this bound is sufficient because there is a deterministic
finite automaton for the language with this size---see the discussion above).
The proof known to date, due to Huynh~\cite{huynh},
involves reproving Parikh's theorem and is more than 10 pages long.
Reduction to Presburger formulae produces a much simpler proof.

Also, our $\PiTwoP$-hardness result for unpda shows that 
the $\forall_{\mathrm{bounded}}\,\exists^*$-fragment of
Presburger arithmetic is $\PiTwoP$-complete, where the variable bound by the universal
quantifier is at most exponential in the size of the formula.
The upper bound holds because the $\exists^*$-fragment is
$\NP$-complete~\cite{vzgs78}.
In comparison,
the $\forall\, \exists^*$-fragment, without any restrictions
on the domain of the universally quantified variable, requires
co-nondeterministic $2^{n^{\MyOmega(1)}}$ time, see Gr\"adel~\cite{g89}.
Previously known fragments that are complete for the second level of the polynomial hierarchy
involve alternation depth~$3$ and a fixed number of quantifiers, as in
Gr\"adel~\cite{g88} and Sch\"oning~\cite{sch97}.
Also note that the $\forall^s \, \exists^t$-fragment
is $\coNP$-complete for all fixed $s \ge 1$ and $t \ge 2$, see Gr\"adel~\cite{g88}.

\vspace{-3ex}
\subsubsection*{Problems involving compressed words.}
Recall Theorem~\ref{th:comporder}: given two SLPs,
it is $\coNP$-complete to compare the generated words componentwise
with respect to any partial order different from equality.
As a corollary,
we get precise complexity bounds for
SLP equivalence in the presence of wildcards or, equivalently,
compressed matching in the model of partial words
(see, e.\,g., Fischer and Paterson~\cite{fpdontcares} and
Berstel and Boasson~\cite{bbpartial}).
Consider the problem \PartialMatch:
the input is a pair of SLPs $\Prog_1$, $\Prog_2$ over the alphabet $\{a, b, \qmark\}$,
generating words of equal length, and the output is ``yes'' iff
for every $i$, $0 \le i < |\Prog_1|$, either $\Prog_1[i] = \Prog_2[i]$
or at least one of $\Prog_1[i]$ and $\Prog_2[i]$ is $\qmark$ (a \df{hole},
or a single-character \df{wildcard}).


Schmidt-Schau\ss~\cite{ss12} defines a problem equivalent to \PartialMatch, along with
another related problem, where one needs to find occurrences of $\str{\Prog_1}$ in
$\str{\Prog_2}$ (as in pattern matching), $\Prog_2$ is known to contain no
holes, and two symbols match iff they are equal or at least one of
them is a hole. For this related problem, he develops
a poly\-nomial-time algorithm that finds (a representation of) all matching occurrences and operates
under the assumption that the number of holes in $\str{\Prog_1}$
is polynomial in the size of the input.
He also points out that no solution
for (the general case of) \PartialMatch\ is known---unless
a polynomial upper bound on the number of $\qmark$s in $\str{\Prog_1}$ and $\str{\Prog_2}$
is given.
Our next proposition shows that such a solution is not possible
unless $\P = \NP$. It is an easy consequence of Theorem~\ref{th:comporder}.

\begin{proposition}
\label{p:pm}
\PartialMatch\ is $\coNP$-complete.
\end{proposition}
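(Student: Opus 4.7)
The plan is to derive both directions from Theorem~\ref{th:comporder} combined with the polynomial-time solvability of \QuerySLP, which handles larger alphabets than $\Bin$ without modification.

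For the $\coNP$ upper bound, I would observe that a no-instance is witnessed by a single position $i$ with $0 \le i < |\Prog_1|$ such that $\Prog_1[i]$ and $\Prog_2[i]$ are distinct letters from $\{a, b\}$, i.e., both non-$\qmark$ and unequal. Such an $i$ is encoded in $O(\mathrm{size}(\Prog_1) + \mathrm{size}(\Prog_2))$ bits, and the two symbols at position~$i$ can be extracted via the polynomial-time algorithm for \QuerySLP\ applied to each SLP. Guessing $i$ and performing the two queries therefore gives a polynomial-time verifier for no-instances.

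For the $\coNP$-hardness, I would reduce from \CompSLPbin, which is $\coNP$-hard by Theorem~\ref{th:comporder}. Given SLPs $\Prog_1, \Prog_2$ over $\Bin$ with the order $0 \le 1$, construct $\Prog'_1, \Prog'_2$ over $\{a, b, \qmark\}$ by the logspace relabelling of the terminal-producing productions: in $\Prog_1$, substitute $0 \mapsto \qmark$ and $1 \mapsto b$; in $\Prog_2$, substitute $0 \mapsto a$ and $1 \mapsto \qmark$. Inspecting the four cases, the pair $(\Prog'_1[i], \Prog'_2[i])$ fails to match in the partial-word sense exactly when $(\Prog_1[i], \Prog_2[i]) = (1, 0)$, producing $(b, a)$; in every other case at least one component is $\qmark$ and hence matching is automatic. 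Therefore $\str{\Prog_1} \le \str{\Prog_2}$ componentwise iff $(\Prog'_1, \Prog'_2)$ is a yes-instance of \PartialMatch, and length is preserved by the relabelling. Only terminal productions are rewritten, so the transformation is clearly in logarithmic space.

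No step is technically delicate. The only thing to verify carefully is the $2 \times 2$ truth table of partial-word matches against the order $0 \le 1$, which singles out the relabelling above (up to swapping the roles of $a$ and $b$); the other ingredients are the standard \QuerySLP\ procedure and a straightforward terminal substitution on SLPs already used elsewhere in the paper.
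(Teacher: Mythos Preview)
Your proof is correct and essentially identical to the paper's: the upper bound is stated as obvious there (your explicit verifier via \QuerySLP\ just spells this out), and the hardness reduction from \CompSLPbin\ is the same terminal relabelling, with only the roles of $a$ and $b$ interchanged, exactly as you note.
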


\begin{proof}
Membership in $\coNP$ is obvious, and the
hardness is by a reduction from \CompSLPbin. Given a pair of SLPs $\Prog_1$, $\Prog_2$
over $\Bin$, substitute $\qmark$ for $0$ and $a$ for $1$ in $\Prog_1$,
and $b$ for $0$ and $\qmark$ for $1$ in $\Prog_2$. The resulting pair of SLPs
over $\{a, b, \qmark\}$ is a yes-instance of \PartialMatch\ iff the original
pair is a yes-instance of \CompSLPbin.
\qed
\end{proof}

\noindent
The wide class of \df{compressed membership} problems (deciding $\str\Prog \in L$)
is studied and discussed
in Je\.{z}~\cite{jezcomp} and Lohrey~\cite{l12}.
In the case of words over the unary alphabet, $w \in \aalph^*$, expressing $w$
with an SLP is poly-time equivalent to representing it with its length $|w|$
written in binary.
An easy corollary of Theorem~\ref{th:member} is that
deciding $w \in L(\A)$, where $\A$ is a (not necessarily unary) deterministic
pushdown automaton and $w = a^n$ with $n$ given in binary, is $\P$-complete.

Finally, we note that the precise complexity of SLP equivalence remains open~\cite{l12}.
We cannot immediately apply lower bounds for udpda equivalence, since we do not know if
the translation from udpda to indicator pairs in Theorem~\ref{th:ip} can be implemented
in logarithmic (or even polylogarithmic) space.

{\small

}

%


\begin{thebibliography}{99}

\bibitem{bklpr}
Berman, P., Karpinski, M., Larmore, L.L., Plandowski, W., Rytter, W.:
On the complexity of pattern matching for highly compressed
two-dimensional texts.
JCSS 65(2), 332--350 (2002)

\bibitem{bbpartial}
Berstel, J., Boasson, L.:
Partial words and a theorem of Fine and Wilf.
TCS 218(1), 135--141 (1999)

\bibitem{bcr}
Bertoni, A., Choffrut, C., Radicioni, R.:
Literal shuffle of compressed words.
In: Ausiello, G., Karhumäki, J., Mauri, G., Ong, L. (eds.)
IFIP TCS 2008. IFIP, vol.~273, pp.~87--100.
Springer, Boston (2008)

\bibitem{bgj13}
B\"ohm, S., G\"oller, S., Jan\v{c}ar, P.:
Equivalence of deterministic one-counter automata is $\NL$-complete.
In: 
STOC'13, pp.~131--140.
ACM (2013)

\bibitem{cmtv}
Caussinus, H., McKenzie, P., Th\'{e}rien, D., Vollmer, H.:
Nondeterministic $\NC^1$ computation.
JCSS 57(2), 200--212 (1998)

\bibitem{fpdontcares}
Fischer, M.J., Paterson, M.S.:
String-matching and other products.
In: Karp, R. (ed.)
SIAM-AMS proceedings, vol.~7.
AMS (1974)

\bibitem{ucflreg}
Ginsburg, S., Rice, H.G.:
Two families of languages related to ALGOL.
Journal of the ACM 9(3), 350--371 (1962)

\bibitem{ghrtw}
Gla\ss{}er, C., Herr, K., Reitwie\ss{}ner, C., Travers, S., Waldherr, M.:
Equivalence problems for circuits
over sets of natural numbers.
Theory of Computing Systems 46(1), 80--103 (2010)

\bibitem{gol81}
Goldschlager, L.M.:
$\varepsilon$-productions in context-free grammars.
Acta Informatica, 16(3), 303--308 (1981)

\bibitem{g89}
Gr\"adel, E.:
Dominoes and the complexity of subclasses of logical theories.
Annals of Pure and Applied Logic 43(1), 1--30 (1989)

\bibitem{g88}
Gr\"adel, E.:
Subclasses of Presburger arithmetic and the polynomial-time hierarchy.
TCS 56(3), 289--301 (1988)

\bibitem{hrs}
Hunt III, H.B., Rosenkrantz, D.J., Szymanski, T.G.:
On the equivalence, containment, and covering problems
for the regular and context-free languages.
JCSS 12(2), 222--268 (1976)

\bibitem{huynhcomm}
Huynh, D.T.:
Commutative grammars: the complexity of uniform word problems.
Information and Control 57, 21--39 (1983)

\bibitem{huynh}
Huynh, D.T.:
Deciding the inequivalence of context-free grammars with 1-letter
terminal alphabet is $\MySigma_2^p$-complete.
TCS 33(2--3), 305--326 (1984)

\bibitem{jan}
Jan\v{c}ar, P.:
Decidability of DPDA language equivalence via first-order grammars.
In: LICS~2012, pp.~415--424.
IEEE (2012)


\bibitem{jezcomp}
Je\.{z}, A.:
The complexity of compressed membership problems for finite automata.
Theory of Computing Systems, 1--34 (2013)

\bibitem{jl76}
Jones, N.D., Laaser, W.T.:
Complete problems for deterministic polynomial time.
TCS 3(2), 105--117 (1976)

\bibitem{kto10}
Kopczy\'nski, E., To, A.W.:
Parikh images of grammars: complexity and applications.
In: LICS~2010, pp.~80--89. IEEE Computer Society (2010)

\bibitem{ll06}
Lifshits, Y., Lohrey, M.:
Querying and embedding compressed texts.
In: 
MFCS~2006. LNCS, vol.~4162, pp.~681--692.
Springer (2006)

\bibitem{l12}
Lohrey, M.:
Algorithmics on SLP-compressed strings: a survey.
Groups Complexity Cryptology 4(2), 241--299 (2012)

\bibitem{l11}
Lohrey, M.:
Leaf languages and string compression.
Information and Computation 209(6), 951--965 (2011)

\bibitem{l06}
Lohrey, M.:
Word problems and membership problems on compressed words.
SIAM Journal on Computing 35(5), 1210--1240 (2006)

\bibitem{mw07}
McKenzie, P., Wagner, K.W.:
The complexity of membership problems
for circuits over sets of natural numbers.
Computational Complexity 16(3), 211--244 (2007)


\bibitem{pudpda}
Pighizzini, G.:
Deterministic pushdown automata and unary languages.
International Journal of Foundations of Computer Science 20(4), 629--645 (2009)

\bibitem{prjewels}
Plandowski, W., Rytter, W.:
Complexity of language recognition problems for compressed words.
In: 
Karhum\"aki, J., Maurer, H., P\u{a}un, G., Rozenberg, G. (eds.)
Jewels are Forever, pp.~262--272.
Springer (1999)

\bibitem{savage}
Savage, J.E.:
Models of computation: Exploring the power of computing.
Addison-Wesley (1998)

\bibitem{ss12}
Schmidt-Schau\ss, M.:
Matching of compressed patterns with character variables.
In: 
RTA~2012. LIPIcs, vol.~15, pp.~272--287.
Dagstuhl (2012)

\bibitem{sch97}
Sch\"oning, U.:
Complexity of Presburger arithmetic with fixed quantifier dimension.
Theory of Computing Systems 30(4), 423--428 (1997)

\bibitem{sen}
S\'{e}nizergues, G.:
$L(A) = L(B)$? A simplified decidability proof.
TCS 281(1--2), 555-608 (2002)

\bibitem{stir}
Stirling, C.:
Deciding DPDA equivalence is primitive recursive.
In: 
ICALP~2002. LNCS, vol.~2380, pp.~821--832.
Springer (2002)

\bibitem{sm73}
Stockmeyer, L.J., Meyer, A.R.:
Word problems requiring exponential time: Preliminary report.
In: STOC~1973, pp.~1--9.
ACM, New York (1973)

\bibitem{vss}
Verma, K.N., Seidl, H., Schwentick, T.:
On the complexity of equational Horn clauses.
In:
CADE~2005. LNAI, vol.~3632, pp.~337--352.
Springer (2002)

\bibitem{vzgs78}
Von zur Gathen, J., Sieveking, M.:
A bound on solutions of linear integer equalities and inequalities.
Proceedings of the AMS 72(1), 155--158 (1978)

\end{thebibliography}
\end{document}